\documentclass[letterpaper,journal]{IEEEtran}
\usepackage{amsmath,amsfonts,amssymb}
\usepackage{algorithm}
\usepackage{algpseudocode}
\usepackage{array}
\usepackage[caption=false,font=normalsize,labelfont=sf,textfont=sf]{subfig}
\usepackage{textcomp}
\usepackage{stfloats}
\usepackage{url}
\usepackage{verbatim}
\usepackage{graphicx}
\usepackage{cite}

\usepackage[hidelinks]{hyperref}
\usepackage{orcidlink}

\usepackage{listings}
\usepackage{mdframed}
\usepackage{tabularray}
\UseTblrLibrary{booktabs}
\usepackage{setspace}
\usepackage{comment}
\usepackage{float}
\usepackage{xcolor}
\usepackage{tcolorbox}
\usepackage{lipsum}
\usepackage{bm}
\usepackage{makecell}
\usepackage{multirow}
\usepackage{fancyhdr}
\usepackage{enumerate}
\usepackage[figuresright]{rotating}
\usepackage{colortbl}
\usepackage{ifthen}

\lstset{
    basicstyle=\ttfamily,
    commentstyle=\ttfamily\color{green!50!black},
    keywordstyle=\ttfamily\color{blue},
    language=Java,
    numbers=left,
    numbersep=5pt,
    numberstyle=\tiny\color{gray},
}

\mdfdefinestyle{mystyle}{
  linewidth=1pt,
  linecolor=black,
  backgroundcolor=white,
}

\newcommand{\tabincell}[2]{\begin{tabular}{@{}#1@{}}#2\end{tabular}}
\newcolumntype{C}[1]{>{\centering\arraybackslash}p{#1}}

\definecolor{texgray}{RGB}{244, 245, 246}
\definecolor{texrightgray}{RGB}{99, 99, 99}
\definecolor{green}{RGB}{214, 230, 189}
\definecolor{leftgreen}{RGB}{19, 138, 7}
\definecolor{gray}{RGB}{200,200,200}
\definecolor{highlightcolor}{RGB}{255,251,204}

\newmdenv[
  leftline=false,
  rightline=true,
  topline=false,
  bottomline=false,
  backgroundcolor=texgray,
  linecolor=texrightgray,
  linewidth=2pt,
  skipabove=5pt
]{myrightline}

\newmdenv[
  leftline=true,
  rightline=false,
  topline=false,
  bottomline=false,
  backgroundcolor=green!20,
  linecolor=leftgreen,
  linewidth=2pt,
  skipabove=5pt
]{myleftline}

\newboolean{showcomments}
\setboolean{showcomments}{true}
\ifthenelse{\boolean{showcomments}}
  {
    \newcommand{\mynote}[2]{
        \fbox{\bfseries\sffamily\scriptsize#1}
        {
          \small$\blacktriangleright$
          \textsf{
            \emph{
              \ifthenelse{\equal{#1}{RS}}{\textcolor{red}{#2}}{
                \ifthenelse{\equal{#1}{Brygon}}{\textcolor{blue}{#2}}{
                  \ifthenelse{\equal{#1}{Zheng}}{\textcolor{orange}{#2}}{
                    \textcolor{black}{#2}
                  }
                }
              }
            }
          }
          $\blacktriangleleft$
        }
    }
  }
  {
    \newcommand{\mynote}[2]{}
  }

\lstdefinelanguage{JSON}{
    basicstyle=\ttfamily,
    breaklines=true,
    showstringspaces=false,
    keywordstyle=\color{blue}\bfseries,
    morestring=[b]",
    morecomment=[l]{//},
    morecomment=[s]{/*}{*/},
    morekeywords={mutant_id,is_interference_mutant,reason},
    literate={\"}{{\"}}1,
}

\lstnewenvironment{code}[1][]
{\lstset{
  language=C,
  escapeinside={(*@}{@*)},
  basicstyle=\scriptsize\ttfamily,
  breaklines=true,
  numbers=none,
  numbersep=5pt,
  xleftmargin=10pt,
  showstringspaces=false,
}}
{}

\hyphenation{op-tical net-works semi-conduc-tor IEEE-Xplore}

\begin{document}

\newtheorem{theorem}{Theorem}
\newtheorem{definition}{Definition}
\newtheorem{corollary}{Corollary} 
\newtheorem{proof}{Proof}
\newtheorem{lemma}{Lemma}

\title{FLIMs: Fault Localization Interference Mutants, Definition, Recognition and Mitigation}

\author{Hengyuan Liu\orcidlink{0000-0002-5884-2089},
        Zheng Li\orcidlink{0000-0002-3938-7033},~\IEEEmembership{Member,~IEEE},
        Donghua Wang,
        Yankai Wu,
        Xiang Chen\orcidlink{0000-0002-1180-3891},~\IEEEmembership{Member,~IEEE},
        and~Yong Liu\orcidlink{0000-0003-1754-3039},~\IEEEmembership{Member,~IEEE}
        \thanks{Manuscript received XXX XX, 2025; revised XXX XX, 2025. \textit{(Hengyuan Liu and Zheng Li contributed equally to this work; Corresponding author: Yong Liu.)}}
        \thanks{Hengyuan Liu, Zheng Li, Donghua Wang, Yankai Wu, and Yong Liu are with the College of Information Science and Technology, Beijing University of Chemical Technology, Beijing, China. (e-mail: lhywandm@163.com; lizheng@mail.buct.edu.cn; 2023200825@buct.edu.cn; 1074154081@qq.com; lyong@mail.buct.edu.cn).}
        \thanks{Xiang Chen is with the School of Artificial Intelligence and Computer Science, Nantong University, Nantong, China (e-mail: xchencs@ntu.edu.cn).}
        \thanks{Digital Object Identifier XX.XXXX/TSE.XXXX.XXXXXXX}}

\markboth{IEEE Transactions on Software Engineering,~Vol.~XX, No.~X, XXX~2025}
{Liu \MakeLowercase{\textit{et al.}}: FLIMs: Fault Localization Interference Mutants, Definition, Recognition and Mitigation}

\maketitle

\begin{abstract}

Mutation-based Fault Localization (MBFL) has been widely explored for automated software debugging, leveraging artificial mutants to identify faulty code entities. However, MBFL faces significant challenges due to interference mutants generated from non-faulty code entities but can be killed by failing tests. These mutants mimic the test sensitivity behaviors of real faulty code entities and weaken the effectiveness of fault localization.
To address this challenge, we introduce the concept of Fault Localization Interference Mutants (FLIMs) and conduct a theoretical analysis based on the Reachability, Infection, Propagation, and Revealability (RIPR) model, identifying four distinct interference causes.
Building on this theoretical analysis, we propose a novel approach to semantically recognize and mitigate FLIMs using LLM-based semantic analysis, enhanced by fine-tuning techniques and confidence estimation strategies to address LLM output instability. The recognized FLIMs are then mitigated by refining the suspiciousness scores calculated from MBFL techniques.
We integrate FLIM recognition and mitigation into the MBFL workflow, developing MBFL-FLIM, a fault localization framework that enhances MBFL's effectiveness by reducing misleading interference while preserving real fault-revealing information. Our empirical experiments on the Defects4J benchmark with 395 program versions using eight LLMs demonstrate MBFL-FLIM's superiority over traditional SBFL and MBFL methods, advanced dynamic feature-based approaches, and recent LLM-based fault localization techniques. 
Specifically, MBFL-FLIM achieves an average absolute improvement of 44 faults in the Top-1 metric, representing a significant enhancement over baseline methods. Further evaluation confirms MBFL-FLIM's robust performance in multi-fault scenarios, with ablation experiments validating the contributions of the fine-tuning and confidence estimation components.

\end{abstract}

\begin{IEEEkeywords}
Mutation-based Fault Localization, Fault Localization Interference Mutant, Large Language Models
\end{IEEEkeywords}

\section{Introduction}
\label{sec:Introduction}

\IEEEPARstart{S}{oftware} faults are inevitable in modern software systems, making fault localization a critical activity in software engineering~\cite{wong2016survey}. Traditional debugging approaches rely heavily on manual inspection and domain expertise, which can be time-consuming and error-prone, especially for large-scale software systems~\cite{pearson2017evaluating}. Automated fault localization techniques have emerged as essential tools to assist developers in localizing the faulty code entities in the software, thereby reducing debugging time and improving software quality.

Mutation-based Fault Localization (MBFL) is a widely studied automated fault localization technique with a theoretical foundation~\cite{papadakis2015metallaxis, moon2014ask}.
MBFL adopts mutation testing whereby mutation operators are applied to generate mutants and the test suite is executed on each mutant to obtain a kill matrix recording the test result changes of mutants~\cite{chekam2016assessing, wang2025systematic}.
Under the basic assumption that mutants of faulty code entities have higher failing-test kill rates and lower passing-test kill rates, the kill matrix is mapped to each mutant's suspiciousness and aggregated to produce a ranked list of program entities for localization~\cite{wang2025systematic, li2017transforming}.
The competent programmer hypothesis~\cite{jia2010analysis} and the coupling effect hypothesis~\cite{offutt1992investigation} ensure that mutants generated by mutation testing can simulate real complex faults to some extent, while the suspiciousness formula design assumption guides the mapping from mutation analysis results to program entity suspiciousness, making it quantifiable~\cite{papadakis2015metallaxis, moon2014ask}.

MBFL faces significant challenges that limit its effectiveness in practice, primarily due to the high computational cost of mutant generation and execution. Recent research has observed a more subtle but critical issue: the existence of interference phenomena that can mislead fault localization techniques~\cite{liu2024delta4ms, du2022improving, jang2022hotfuz, kim2023learning}. Specifically, certain mutants generated from non-faulty code entities (which should not be killed) are killed by failing test cases and exhibit behaviors similar to those of mutants from real faulty entities. 
Such mutants may exhibit high killing rates by failing test cases, resulting in inflated suspiciousness scores for non-faulty code entities and consequently degrading the accuracy of fault localization rankings.
Empirical observations reveal that such interference mutants can distort the effectiveness of MBFL by creating false positive indications that compete with real fault-revealing indications from actual faulty locations~\cite{liu2025scope}.

The complexity of the interference phenomena in fault localization makes it a challenge to understand and resolve~\cite{liu2024delta4ms, du2022improving, jang2022hotfuz, kim2023learning}. 
Interference phenomena were investigated in mutation testing, particularly in the context of test suite quality assessment~\cite{tian2024large, schuler2013covering, garg2022cerebro}, where equivalent mutants and subsuming mutants are well-characterized interference factors that primarily affect mutation score accuracy. 
Unlike mutation testing, MBFL calculates the suspiciousness of each mutant and ranks them, making it more sensitive to interference mutants. Existing research~\cite{liu2024delta4ms, du2022improving, jang2022hotfuz, kim2023learning} on MBFL interference phenomena has primarily focused on statistically measuring the effect or analysing the coupling relationship of interference mutants.
However, due to the complexity of the mutant-injected faults and the program's original fault, recognizing interference mutants requires deep semantic understanding of how mutant modifications interact with program semantics, test case characteristics, execution contexts, and suspiciousness calculation assumptions.
Real-world interference mutants often exhibit characteristics from multiple root causes simultaneously, with subtle disruptive effects that require sophisticated reasoning to distinguish from fault-revealing behavior. Additionally, practical software projects generate thousands of mutants, making manual analysis infeasible and demanding automated approaches capable of large-scale semantic analysis. 
Therefore, it is beneficial to systematically understand the MBFL interference mutants and develop more automated recognition and mitigation methods with advanced semantic analysis techniques like Large Language Models (LLMs), considering the complex interaction between mutants and program semantics.

To systematically understand MBFL interference mutants, we introduce the concept of Fault Localization Interference Mutants (FLIMs) and conduct a theoretical analysis via the Reachability, Infection, Propagation, and Revealability (RIPR) model~\cite{li2016test}. Four distinct interference root causes have been identified that demonstrate the manner in which such mutants interfere with suspiciousness calculation in MBFL. This theoretical analysis provides a systematic framework for understanding why certain mutants interfere with fault localization and establishes the basis for developing recognition and mitigation strategies by revealing the semantic complexity and contextual dependencies.

Furthermore, to alleviate the interference phenomena in MBFL, we propose a novel approach to semantically recognize and mitigate FLIMs using LLM-based semantic analysis. Specifically, FLIM recognition employs LLM-based semantic analysis to recognize FLIMs through code context and dynamic execution result understanding, incorporates specialized fine-tuning technique to adapt LLMs for domain-specific FLIM recognition tasks, and implements confidence estimation algorithms to enhance recognition reliability and stability. 
FLIM mitigation refines the suspiciousness scores computed by traditional MBFL techniques utilizing the FLIM recognition results. We develop MBFL-FLIM (FLIM Recognition-based Fault Localization), a fault localization framework that integrates FLIM recognition and mitigation to alleviate the impact of FLIMs on fault localization effectiveness.

To evaluate the effectiveness of MBFL-FLIM, we conduct extensive experiments using the Defects4J v1.2.0 benchmark with 395 buggy programs across six open-source Java projects. 
Our experimental design compares MBFL-FLIM against baselines including SBFL~\cite{abreu2006evaluation}, MBFL~\cite{papadakis2015metallaxis}, dynamic feature-based baselines (i.e., Delta4Ms~\cite{liu2024delta4ms, du2022improving}, BLMu~\cite{wang2025systematic}, SmartFL~\cite{wu2025smartfl}), and recent LLM-based fault localization techniques (i.e., LLMAO~\cite{yang2024large}, LLMFL~\cite{wu2023large}). 
The experimental results demonstrate that MBFL-FLIM achieves substantial improvements across all evaluation metrics: 96 at Top-1, 133 at Top-3, 171 at Top-5 and a MFR of 19.74, significantly outperforming all baseline methods. Compared to baseline methods, MBFL-FLIM demonstrates significant absolute improvements: an average increase of 44 faults for Top-1 (from an average baseline of 52 to 96), an increase of 30 faults for Top-3 (from an average baseline of 103 to 133), and an increase of 38 faults for Top-5 (from an average baseline of 133 to 171).
In multi-fault scenarios, MBFL-FLIM maintains robust performance. 
The ablation study reveals that both fine-tuning and confidence estimation components contribute to the overall performance, with the Principal Component Analysis (PCA) confidence estimation algorithm achieving optimal results.

We summarize our main contributions as follows:
\begin{itemize}
    \item \textbf{FLIMs: Formal Concept and Root Cause Analysis:} We introduce the formal concept of Fault Localization Interference Mutants (FLIMs) and propose a theoretical analysis based on the RIPR model. Four distinct interference root causes (reach-based, infection-based, propagation-based, and revealability-based interference) are identified that systematically explain the root causes of mutant interference in fault localization.
    
    \item \textbf{LLM-based FLIM Recognition:} We develop a novel LLM-based approach for FLIM recognition that leverages semantic code analysis capabilities. The method incorporates LLM fine-tuning technique for domain-specific FLIM recognition tasks and implements confidence estimation strategies to enhance recognition reliability and overcome the instability of LLM output.
    
    \item \textbf{FLIM Mitigation in MBFL:} We propose MBFL-FLIM, a fault localization framework that systematically mitigates FLIM interference in MBFL. The framework integrates FLIM recognition with suspiciousness refinement to mitigate the influence of misleading interference while preserving real fault-revealing information, thereby improving the effectiveness of MBFL.
    
    \item \textbf{Comprehensive Empirical Study:} We conduct experiments on the Defects4J benchmark with 395 program versions using eight LLMs from multiple series. The experimental results demonstrate that MBFL-FLIM outperforms existing SBFL, MBFL, dynamic feature-based, and LLM-based fault localization approaches across multiple evaluation metrics.
\end{itemize}

\section{Background}
\label{sec:Background}

\subsection{Mutation-Based Fault Localization}

\subsubsection{Theoretical Foundations of MBFL}

MBFL represents a systematic approach to software debugging that builds upon three theoretical foundations~\cite{papadakis2015metallaxis, moon2014ask}. 
At its core, MBFL operates on the principle that artificial faults, introduced through mutation operators, can simulate real programming errors and guide developers toward actual fault locations.

The foundation of MBFL rests on \emph{mutation analysis}, which provides the fundamental basis for fault simulation~\cite{papadakis2015metallaxis}. Through syntactic transformations governed by predefined rules (e.g., changing $a < b$ to $a > b$), mutation operators generate program mutants~\cite{papadakis2015metallaxis, kooli2017computing}. Each mutant $m$ represents a potential fault pattern, and its behavior under test execution determines whether it is killed (detected by at least one test case) or remains alive (undetected)~\cite{debroy2014combining, fan2023sgs}.

The validity of using artificial mutants to locate real faults is justified by the \emph{competent programmer hypothesis}~\cite{jia2010analysis}, which assumes that programmers produce code that is syntactically close to the correct version. Under this assumption, real faults typically manifest as small syntactic deviations from the intended implementation, making them similar to the artificial faults created by mutation operators. This bridge between artificial and real faults legitimizes the use of mutant behavior patterns as indicators of fault locations.

The \emph{coupling effect hypothesis} strengthens MBFL's foundation by establishing the relationship between simple and complex faults~\cite{offutt1992investigation}. This hypothesis asserts that test cases capable of detecting simple faults (represented by mutants) are also likely to reveal complex, real-world faults. Consequently, the interaction patterns between test cases and mutants provide diagnostic information about fault characteristics, enabling MBFL to infer fault locations through analysis of mutant kill patterns.

Based on these three foundations, MBFL techniques can compute suspiciousness scores that quantify the possibility of each program entity containing a fault by analyzing the behavior of mutants under passing and failing test cases~\cite{papadakis2015metallaxis, moon2014ask}. The effectiveness of MBFL fundamentally relies on its suspiciousness formula design assumption that distinguishes mutants generated from faulty statements versus those from non-faulty statements based on their test sensitivity profiles. Specifically, mutants from faulty statements should exhibit significantly higher kill rates when exposed to failing tests compared to mutants from non-faulty statements~\cite{papadakis2015metallaxis, moon2014ask}, as the former are more likely to interact with the error propagation pathways that lead to observable failures. This assumption forms the basis underlying suspiciousness calculation and ranking in MBFL approaches, which has proven effective in practice and established MBFL as a reliable methodology for software debugging.

\subsubsection{The Process of MBFL}

The MBFL process consists of five main steps that transform mutation analysis results into ranked suspiciousness scores indicating fault localization~\cite{papadakis2015metallaxis}.

\textbf{Step 1: Test Execution and Classification.} The MBFL process begins by executing the original program $P$ using the test suite $T = \{t_1, t_2, \ldots, t_{|T|}\}$. The execution function $\mathcal{O}(P, T)$ records test outcomes, while the coverage function $\mathcal{C}(P, T)$ captures which program entities are executed. The test suite $T$ is partitioned into two subsets: $T_p = \{t \in T : \mathcal{O}(P, t) = \text{pass}\}$ representing passing tests and $T_f = \{t \in T : \mathcal{O}(P, t) = \text{fail}\}$ representing failing tests. The set of suspicious program entities $\mathcal{E}$ is determined by the coverage of failing tests: $\mathcal{E} = \{e : e \in \mathcal{C}(P, T_f)\}$, as only entities executed by failing tests can potentially contain faults.

\textbf{Step 2: Mutant Generation and Execution.} Mutation operators $\mathcal{M} = \{\mu_1, \mu_2, \ldots, \mu_k\}$ are applied to generate mutants for the suspicious entities identified in Step 1. The mutant set is defined as $\mathcal{M}(P) = \bigcup_{\mu \in \mathcal{M}} \mu(P, \mathcal{E})$, where each $\mu(P, \mathcal{E})$ produces mutants through syntactic transformations applied to entities in $\mathcal{E}$. Each mutant $m_i \in \mathcal{M}(P)$ represents a syntactic modification that simulates potential faults at a program location. All generated mutants are executed against the test suite $T$, with execution outcomes recorded as $\mathcal{O}(m_i, T)$ and the mutant location information captured as $\text{loc}(m_i)$.

\textbf{Step 3: Mutant Killing Analysis.} This step determines whether each mutant is killed by comparing its execution behavior with the original program. Two killing criteria are commonly used. \emph{Strong Killing}~\cite{harman2011Strong} considers a mutant killed if its execution changes a failing test to passing or vice versa. \emph{Weak Killing}~\cite{harman2011Strong} considers a mutant killed if there is any observable difference in execution behavior compared to the original program. This approach, exemplified by Metallaxis~\cite{papadakis2015metallaxis}, provides richer information for fault localization. Based on the killing analysis, for each mutant $m_i$, we define $T_k(m_i) = \{t \in T : \text{killed}(m_i, t) = \text{true}\}$ as the set of tests that kill the mutant, and $T_n(m_i) = T \setminus T_k(m_i)$ as the set of tests that do not kill the mutant.

\textbf{Step 4: Suspiciousness Calculation.} For each mutant $m_i \in \mathcal{M}(P)$, four key parameters are computed based on the intersection of mutant killing categories and test outcomes: $a_{np}(m_i) = |T_n(m_i) \cap T_p|$ (tests that do not kill the mutant and pass), $a_{kp}(m_i) = |T_k(m_i) \cap T_p|$ (tests that kill the mutant and pass), $a_{nf}(m_i) = |T_n(m_i) \cap T_f|$ (tests that do not kill the mutant and fail), and $a_{kf}(m_i) = |T_k(m_i) \cap T_f|$ (tests that kill the mutant and fail). These parameters are then input to a suspiciousness function $\mathcal{F}_{sus}$ to calculate the suspiciousness score $\emph{Sus}(m_i) = \mathcal{F}_{sus}(a_{np}(m_i), a_{kp}(m_i), a_{nf}(m_i), a_{kf}(m_i))$ for each mutant. Table~\ref{tab: suspicious formulas for MBFL} presents the most commonly used formulas of $\mathcal{F}_{sus}$ (i.e., the $\emph{Sus}(m)$ in the Formula column).

\begin{table}[hbtp]
\centering \caption{Suspiciousness Formulas for MBFL} 
\label{tab: suspicious formulas for MBFL}
{
\renewcommand{\arraystretch}{1.5}
\begin{tabular}{cc}
\toprule
\makecell[c]{\textbf{Name}} & \makecell{\textbf{Formula}} \\ \hline
Ochiai~\cite{abreu2006evaluation}  &   $\emph{Sus}(m)=\frac{a_{{kf}}(m)}{\sqrt{\left(a_{k f}(m)+a_{n f}(m)\right)\left(a_{k f}(m)+a_{k p}(m)\right)}}$      \\ 
Tarantula~\cite{jones2005empirical}& 
        $\emph{Sus}(m)=\frac{\frac{a_{kf}(m)}{a_{kf}(m)+a_{kp}(m)}}{\frac{a_{kf}(m)}{a_{kf}(m)+a_{nf}(m)}+\frac{a_{kp}(m)}{a_{kp}(m)+a_{np}(m)}}$
        \\ 
        
D$^*$~\cite{wong2013dstar}     &   $\emph{Sus}(m)=\frac{a_{k f}^*(m)}{a_{k p}(m) + a_{n f}(m)}$    \\ 
Jaccard~\cite{wong2016survey} & 
        
        $\emph{Sus}(m)=\frac{a_{kf}(m)}{a_{kf}(m)+a_{nf}(m)+a_{np}(m)}$
        
        \\
\bottomrule
\end{tabular}
}
\end{table}

\textbf{Step 5: Suspiciousness Aggregation.} After computing individual mutant suspiciousness scores, MBFL aggregates these scores to generate suspiciousness rankings for program entities $\mathcal{E}$. For each program entity $e \in \mathcal{E}$, the set of associated mutants is defined as $\mathcal{M}_e = \{m \in \mathcal{M}(P) : \text{loc}(m) = e\}$, where $\text{loc}(m)$ represents the program location where mutant $m$ was introduced. The suspiciousness scores are combined using an aggregation function $\mathcal{A}$, such that $\emph{Sus}(e) = \mathcal{A}(\{Sus(m) : m \in \mathcal{M}_e\})$, where $\mathcal{A}$ can be averaging, maximum, or weighted combinations. The final output is a ranked list $\mathcal{R} = \langle e_1, e_2, \ldots, e_{|\mathcal{E}|}\rangle$ where $\emph{Sus}(e_i) \geq Sus(e_{i+1})$.

\subsection{Mutant Interference}

Mutant interference represents a fundamental challenge that affects the reliability and accuracy of mutation-related computations across different application domains. Different types of mutants can introduce various forms of interference that impact the effectiveness of mutation testing, with distinct manifestations in test suite quality assessment and fault localization.

\subsubsection{Mutant Interference in Mutation Testing} 

Extensive research has investigated various types of mutants that interfere with mutation testing's primary objective of test suite quality assessment. \emph{Equivalent mutants}~\cite{tian2024large, schuler2013covering} are syntactically distinct from the original program but semantically identical, meaning no test case can ever ``kill'' them~\cite{tian2024large}. Their presence deflates the mutation score (i.e., the ratio of killed mutants to the total number of mutants), which can incorrectly suggest deficiencies in a test suite that do not exist~\cite{tian2024large}, leading to inaccurate assessments of test suite quality. \emph{Subsuming mutants} represent another category where mutant $m_{1}$ subsumes mutant $m_{2}$ if every test case detecting $m_{1}$ also detects $m_{2}$~\cite{garg2022cerebro}. When calculating mutation scores, including numerous redundant mutants artificially inflates the test suite's fault detection capability assessment. For instance, a test case might kill 10 mutually redundant mutants, but this does not indicate the test case's ability to detect 10 different types of faults. 
This leads to overly optimistic test quality assessments and can be misleading.

\subsubsection{Mutant Interference in Fault Localization} 

Due to the fundamental difference between test suite quality assessment and fault localization objectives, interference mutants in mutation testing typically have minimal impact on fault localization effectiveness. For MBFL, equivalent mutants have no impact on fault localization since they cannot be killed by test cases and thus provide no useful information for fault localization~\cite{papadakis2015metallaxis}. Similarly, subsuming mutants have limited impact on fault localization since the redundancy primarily affects test suite evaluation metrics rather than suspiciousness formula design assumption of MBFL.

However, fault localization faces a distinct class of interference mutants that create misleading diagnostic information by exhibiting fault-like killing behaviors without real fault relevance. Kim et al.~\cite{kim2023learning} analyzed the relationship between test cases and mutants, demonstrating that removing mutants with low coupling to faults could increase the number of faults localized at the first rank, providing concrete evidence that high-quality, denoised mutant subsets are more conducive to accurate fault localization. However, their approach relies primarily on statistical coupling analysis rather than semantic understanding of mutant behavior. Jang et al.~\cite{jang2022hotfuz} identified noise inherent to Higher-Order Mutants (HOMs), where suspiciousness scores are propagated to all associated statements, potentially causing non-faulty statements to receive high suspiciousness scores when coupled with faulty ones in an HOM. They proposed an HOM generation strategy to evenly distribute couplings and minimize negative impact on localization accuracy, though this approach focuses specifically on HOM-related noise rather than general interference from mutant analysis. Liu et al.~\cite{liu2024delta4ms, du2022improving} introduced the concept of \emph{mutant bias}, describing how differences between mutants from non-faulty and faulty code can distort fault localization accuracy by inflating scores of non-faulty statements or deflating those of faulty ones. They modeled this using signal theory and developed Delta4Ms to mitigate interference by removing false signals, but their approach treats interference as a statistical bias rather than addressing the underlying causes.

These findings reveal a critical bottleneck in current MBFL methods. Intuitively, mutants located near real faults are expected to be killed by more failing test cases and thus exhibit higher suspiciousness. However, previous studies~\cite{kim2023learning, jang2022hotfuz, liu2024delta4ms, du2022improving} reveal that many mutants derived from non-faulty code are also killed by a large number of test cases, creating interference in the fault localization process. These phenomena suggest that the relationship between mutant behavior and fault location is more complex than initially assumed.

\subsubsection{Limitations and Research Gaps} 

While existing studies provide valuable observations, they remain preliminary in terms of formal characterization and systematic approaches for recognition and mitigation. 
Current approaches primarily focus on statistical or heuristic methods to recognize and remove problematic mutants without a theoretical framework for understanding why certain mutants interfere with fault localization and how to recognize and mitigate them effectively. 
This gap motivates us to formally define and characterize the concept of FLIMs with theoretical analysis, and develop approaches to recognize and mitigate such mutants, thereby improving the accuracy and robustness of MBFL techniques.

\subsection{The Application of LLMs in Code Semantic Analysis}

The evolution of LLMs for code understanding represents a fundamental paradigm shift from traditional rule-based approaches to sophisticated neural architectures capable of advanced semantic reasoning. Early foundational works like Code2Vec~\cite{alon2019code2vec} and Code2Seq~\cite{alon2019code2seq} established the importance of incorporating program structure into machine learning models. 
The introduction of transformer-based architectures marked a critical inflection point, with CodeBERT~\cite{DBLP:journals/corr/abs-2002-08155} pioneering multi-modal pretraining and GraphCodeBERT~\cite{guo2021graphcodebert} advancing semantic understanding through data flow integration.
Contemporary LLMs have evolved along two primary trajectories: fine-tuning general-purpose foundation models on code-specific tasks (e.g., Code Llama~\cite{rozière2024codellama}, CodeFuse~\cite{10.1145/3639477.3639719}, Qwen2.5-Coder~\cite{yang2024qwen25coder}), and integrating code understanding into unified architectures for both natural language and code comprehension (e.g., GPT-4~\cite{openai2023gpt4}, Claude~\cite{anthropic2024claude35sonnet}, DeepSeek series~\cite{liu2024deepseekv3,guo2025deepseekr1}).

Recent advances in LLMs have demonstrated sophisticated capabilities for understanding complex program behavior beyond traditional syntactic structure~\cite{jelodar2025llm,zhao2024unveiling}, including deep understanding of functional semantics, data dependencies, and control flow logic.
This advanced semantic reasoning enables precise distinction between syntactically different but functionally equivalent code mutants, a fundamental requirement for mutation analysis and fault localization. Recent research has explored LLMs' capabilities in equivalent mutant detection, with studies like Empica~\cite{NGUYEN2025107780} revealing that while LLMs demonstrate capability in semantic analysis, there remains room for improvement in their sensitivity to semantically meaningful changes.

Despite these advances, two critical challenges affect the practical deployment of LLMs in software engineering tasks. First, \emph{domain adaptation challenges} arise from the gap between general-purpose LLM training and specialized code analysis requirements. Pre-trained models often lack sufficient sensitivity to subtle semantic differences in domain-specific contexts, leading to suboptimal performance in tasks like equivalent mutant detection where nuanced code understanding is crucial~\cite{jelodar2025llm}. Second, \emph{output reliability challenges} stem from the inherent instability and uncertainty in LLM outputs~\cite{xie2025empirical}, caused by stochastic sampling methods and probabilistic transformer architectures~\cite{bouchard2025uncertainty}. This manifests as inconsistent responses to identical inputs, even with deterministic settings, posing significant challenges for automated fault localization workflows requiring reliable results~\cite{assessing2025reliability}.

To address these challenges, domain-specific fine-tuning techniques~\cite{Ding2023ParameterEfficient} have emerged as essential approaches for bridging the domain gap, allowing adaptation of pre-trained models through supervised learning to enhance sensitivity to specialized code semantic patterns. Additionally, confidence estimation mechanisms~\cite{bouchard2025uncertainty} have become crucial for quantifying output reliability, providing independent algorithms to measure decision uncertainty and improve the robustness of LLM-based software engineering tools.

\section{Motivation Example}
\label{sec: Motivation}
In MBFL, the quality of mutants directly affects the accuracy of localization results. It is possible that mutants could be killed by failing test cases; however, the code entities of these mutants may not always align with the actual faulty locations. 
Such mutants often appear in code entities with complex interdependencies, and their ``fault-like'' behavior can mislead suspiciousness-based metrics, resulting in the inflation of suspiciousness scores for non-faulty code entities and consequently concealing the actual faulty code entities. These mutants are referred to as FLIMs.
For instance, in statement-level fault localization, FLIMs may emerge in statements with intricate control flow or data dependencies, leading to mislocalization of faults. To illustrate these phenomena, a statement-level example is presented as follows.

\begin{table*}[htb]
\centering
\caption{An Example of FLIMs Impact on Fault Localization}

\renewcommand{\arraystretch}{1}

\begin{tabular}{ccC{1.2cm}C{1cm}C{1cm}C{1cm}C{1cm}C{1cm}C{1cm}C{1cm}C{1cm}}
\toprule

\multicolumn{2}{c}{\textbf{Statement}}  & $S_{1}$ & \multicolumn{2}{c}{$S_{2}$ \textcolor{red}{\boldsymbol{$\times$}}} & $S_{3}$ \textcolor{red}{\boldsymbol{$\times$}} & \multicolumn{2}{c}{$S_{4}$} & \multicolumn{2}{c}{$S_{5}$} & $S_{6}$ \\ \midrule

\multicolumn{2}{c}{\textbf{Mutant}}  
& $m_{1}$ \textcolor{blue}{\boldsymbol{$\dagger$}} & $m_{2}$  &  $m_{3}$ &  $m_{4}$  &  $m_{5}$ \textcolor{blue}{\boldsymbol{$\dagger$}} & $m_{6}$ \textcolor{blue}{\boldsymbol{$\dagger$}}  & $m_{7}$ \textcolor{blue}{\boldsymbol{$\dagger$}} & $m_{8}$ \textcolor{blue}{\boldsymbol{$\dagger$}}  & $m_{9}$ \textcolor{blue}{\boldsymbol{$\dagger$}} \\ \midrule

\multirow{5}[6]{*}{\rotatebox[origin=c]{0}{\tabincell{c}{\textbf{Kill}\\\textbf{Matrix}}}} & $T_{p0}$      & -              & -              & -        & *              & -        & -              & -        & *              & -        \\\cmidrule(lr){2-2}\cmidrule(lr){3-11}
            & $T_{f0}$      & *              & *              & *        & -              & -        & *              & *        & *              & *        \\\cmidrule(lr){2-2}\cmidrule(lr){3-11}
            & $T_{p2}$      & -              & -              & *        & -              & -        & -              & *        & -              & *        \\\cmidrule(lr){2-2}\cmidrule(lr){3-11}
            & $T_{p3}$      & -              & -              & *        & -              & -        & -              & *        & -              & -        \\\cmidrule(lr){2-2}\cmidrule(lr){3-11}
            & $T_{f4}$      & *              & -              & -        & *              & *        & *              & *        & *              & -        \\ \midrule
\multirow{3}[3]{*}{{\rotatebox[origin=c]{0}{\tabincell{c}{\textbf{MBFL}}}}}        & $\emph{Sus}(M)$        & 1              & 0.71          & 0.41    & 0.5            & 0.71    & 1              & 0.71    & 0.82          & 0.71    \\\cmidrule(lr){2-2}\cmidrule(lr){3-11}
            & $\emph{Sus}(S)$        & 1 & \multicolumn{2}{c}{0.71} & 0.5 & \multicolumn{2}{c}{1} & \multicolumn{2}{c}{0.82} & 0.71 \\\cmidrule(lr){2-2}\cmidrule(lr){3-11}
            & $\emph{Rank}(S) $      & 1.5 & \multicolumn{2}{c}{4.5} & 6 & \multicolumn{2}{c}{1.5} & \multicolumn{2}{c}{3} & 4.5 \\\midrule
\multirow{3}[2]{*}{\rotatebox[origin=c]{0}{\tabincell{c}{\textbf{MBFL }\\\textbf{-}\\\textbf{FLIM}}}}   & $\emph{Sus}(M)$        & 0 & 0.71 & 0.41 & 0.5 & 0.71 & 0 & 0 & 0 & 0    \\\cmidrule(lr){2-2}\cmidrule(lr){3-11}
            & $\emph{Sus}(S)  $      & 0 & \multicolumn{2}{c}{0.71} & 0.5 & \multicolumn{2}{c}{0} & \multicolumn{2}{c}{0} & 0 \\\cmidrule(lr){2-2}\cmidrule(lr){3-11}
            & $\emph{Rank}(S)$      & 4.5 & \multicolumn{2}{c}{1} & 2 & \multicolumn{2}{c}{4.5} & \multicolumn{2}{c}{4.5} & 4.5 \\
\bottomrule

\end{tabular}

\footnotesize{\textit{Note: The \textcolor{red}{\boldsymbol{$\times$}} labeled statements are faulty statements; The \textcolor{blue}{\boldsymbol{$\dagger$}} labeled mutants are FLIMs; The } $\emph{Sus}(M)$ \textit{of MBFL are calculated using the Ochiai formula. The} $\emph{Rank}(M)$ \textit{are calculated using average ranking for tied scores.}}
\label{tab:flims-example}
\end{table*}

Table~\ref{tab:flims-example} illustrates a sample program composed of six statements and nine mutants with their fault localization data, demonstrating the influence of FLIMs in MBFL. The table uses a horizontal layout where the first row lists the statements ($S_{1}$ to $S_{6}$) and marks which statements are actual faulty statements ($S_{2}$ and $S_{3}$ marked with \textcolor{red}{\boldsymbol{$\times$}}). The second row lists the mutants (e.g., $m_{1}$ to $m_{9}$) and labels the ground truth FLIMs (marked with \textcolor{blue}{\boldsymbol{$\dagger$}}). The kill matrix part shows the execution results of each mutant under different test cases (e.g., $T_{f0}$, $T_{p2}$), where test cases with subscript ``f'' represent failing test cases and those with subscript ``p'' represent passing test cases, and ``*'' and ``-'' denote whether the mutant is killed by that test case.

In the MBFL part, the ``$\emph{Sus}(M)$'' and ``$\emph{Sus}(S)$'' represent the MBFL suspiciousness scores at the mutant and statement levels, respectively, and ``$\emph{Rank}(S)$'' gives the statement-level ranking. Although $S_{2}$ and $S_{3}$ are actual faulty statements, they do not rank at the top. This is due to the high suspiciousness scores assigned to FLIMs like $m_{1}$, $m_{5}$, $m_{6}$, $m_{7}$, $m_{8}$, and $m_{9}$, which belong to non-faulty statements (e.g., $S_{1}$, $S_{4}$, and $S_{5}$ have $\emph{Sus}(S)$ values of 1.0 or 0.82, ranking higher than the faulty statements).

To demonstrate the potential improvement achievable by addressing the FLIM problem, the MBFL-FLIM part shows the ideal scenario where all FLIMs are identified (marked with \textcolor{blue}{\boldsymbol{$\dagger$}}) and their suspiciousness scores are set to 0, i.e., the suspiciousness scores of FLIMs $m_{1}$, $m_{5}$, $m_{6}$, $m_{7}$, $m_{8}$, and $m_{9}$ are reduced to 0. The actual faulty statements $S_{2}$ and $S_{3}$ now achieve the top rankings (ranked 1 and 2 respectively), while the non-faulty statements with FLIMs drop to the bottom rankings (ranked 4.5, which is the average rank for the four statements $S_{1}$, $S_{4}$, $S_{5}$, and $S_{6}$ that all have suspiciousness scores of 0 and tie for positions 3-6).
This comparison clearly illustrates the significant negative impact of FLIMs on fault localization accuracy and demonstrates the substantial potential for improvement if FLIMs can be effectively identified and handled.

The demonstrated impact of FLIMs on fault localization effectiveness motivates a theoretical analysis. The following section provides a formal definition and theoretical foundation for FLIMs, establishes the underlying interference mechanisms, and analyzes FLIMs behavioral root causes using the RIPR model.

\section{FLIMs: Fault Localization Interference Mutants}
\label{sec: FLIMs}

\subsection{The Suspiciousness Formula Design Assumption of MBFL}

The effectiveness of MBFL fundamentally relies on the suspiciousness formula design assumption that mutants generated from faulty statements show different test sensitivity from those from non-faulty statements. Let program $P$ contain faulty statement $s_e$ and non-faulty statements $s_c \in S \setminus \{s_e\}$. Given test suite $T = T_f \cup T_p$ where $T_f$ fails on $P$ and $T_p$ passes, MBFL generates two mutant classes:

\begin{itemize}
    \item $M_e$: Mutants of $s_e$ (fault-relevant)
    \item $M_c$: Mutants of $s_c$ (non-fault-relevant)
\end{itemize}

This assumption asserts that mutants from faulty statements should exhibit significantly higher kill rates when exposed to failing tests compared to mutants from non-faulty statements~\cite{papadakis2015metallaxis, moon2014ask}:

\begin{equation}
    \frac{|\mathcal{K}(M_e, T_f)|}{|M_e|} \gg \frac{|\mathcal{K}(M_c, T_f)|}{|M_c|}
    \label{eq:core_principle}
\end{equation}

where $\mathcal{K}(M_x, T_f) \triangleq \{m \in M_x \mid \exists t_f \in T_f: \mathcal{K}(m, t_f)\}$ denotes mutants killed by failing tests, with $\mathcal{K}(m, t)$ indicating test $t$ detects mutant $m$.

The degree to which this assumption is satisfied in practice directly determines the effectiveness of MBFL. When the assumption holds strongly (i.e., in Equation~\ref{eq:core_principle} $\frac{|\mathcal{K}(M_e, T_f)|}{|M_e|}$ is significantly large than $\frac{|\mathcal{K}(M_c, T_f)|}{|M_c|}$), MBFL achieves high fault localization accuracy, enabling precise identification of faulty statements. Conversely, when this assumption is violated or weakened by interference factors, MBFL's fault localization effectiveness deteriorates, leading to reduced accuracy in localizing the actual fault entities.

\subsection{Definition of FLIMs}
\label{subsec:flim_formulation}

While the suspiciousness formula design assumption of MBFL described above generally holds in practice, it may be challenged by certain anomalous behaviors that can potentially weaken this assumption. There exists an important phenomenon: certain mutants of non-faulty code entities $m_c \in M_c$ violate this fundamental relationship by being killed by at least one failing test ($\exists t_f \in T_f: \mathcal{K}(m_c, t_f)$). We designate such mutants as Fault Localization Interference Mutants (FLIMs).

\begin{definition}[FLIMs]

\label{def:flim}
A mutant of a non-faulty code entity $m_c \in M_c$ is classified as a FLIM if and only if it is killed by at least one failing test:
\begin{equation}
    \text{FLIMs}(m_c) \iff \exists t_f \in T_f: \mathcal{K}(m_c, t_f)
    \label{eq:flim_def}
\end{equation}

\end{definition}

This definition is applicable irrespective of the number of faults present in the program, including single and multiple faults, and accounts for varying numbers of failing tests.
Note that since $m_c$ originates from non-faulty statements in $M_c$, it inherently cannot repair the original fault in $P$, making the non-repair condition implicit in our definition. FLIMs create false diagnostic information by mimicking the test sensitivity behaviors of real faulty statements, thereby weakening the suspiciousness formula design assumption of MBFL. Specifically, when FLIMs are killed by failing tests, they artificially inflate the suspiciousness scores of non-faulty statements, potentially causing these statements to rank higher than the actual faulty statements in the final ranking. This directly undermines the fundamental assumption in Equation~\ref{eq:core_principle} that fault-relevant mutants should exhibit significantly higher kill rates with failing tests compared to non-fault-relevant ones.

\begin{corollary}[FLIMs Location Property]
\label{corollary:flim_location}
For any FLIM $m_c$ that is killed by a failing test $t_f$, the mutation location of $m_c$ must reside on at least one error propagation path from the original fault to the observable failure in $t_f$.
\end{corollary}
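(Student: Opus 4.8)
The plan is to prove the statement by contraposition, decomposing the kill event of $m_c$ on $t_f$ through the four RIPR stages and showing that each stage forces the mutation location onto a dependency chain that terminates at the observable failure. Concretely, I would assume the negation of the conclusion and derive that $t_f$ cannot in fact kill $m_c$, contradicting the hypothesis of the corollary.

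First, I would fix the reference execution. Since $t_f \in T_f$, the original program $P$ fails on $t_f$; invoking the RIPR model for the original fault $s_e$, there is a reachability--infection--propagation--revealability chain along the dynamic execution of $t_f$ that carries the initial infection at $s_e$ to the point where the test oracle observes the incorrect output. I would name this observation point the \emph{revealing point} of $t_f$, and take the error propagation path(s) from $s_e$ to the observable failure to be exactly the dynamic dependency chains in this execution that can influence the value inspected at the revealing point.

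Next, I would analyze the kill of $m_c$. Because $m_c \in M_c$ mutates a non-faulty entity, it cannot repair the original fault (the non-repair property noted after Definition~\ref{def:flim}), so $t_f$ still fails on $m_c$ and the only behavioral change available to distinguish $m_c$ from $P$ on $t_f$ is a difference in the value observed at the revealing point. For the contrapositive, suppose $s_c$ lies on \emph{no} error propagation path from $s_e$ to the observable failure in $t_f$, i.e., $s_c$ belongs to no dependency chain that can influence the revealing point's value. Then, even granting reachability of $s_c$ by $t_f$ and infection of the state at $s_c$ by the mutation, the infected state has no propagation channel to the revealing point, so revealability fails there and the oracle observes the identical (failing) value under $m_c$ and under $P$. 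Consequently $t_f$ produces no observable difference on $m_c$, i.e., $\mathcal{K}(m_c, t_f)$ does not hold---contradicting the assumption that $m_c$ is killed by $t_f$. Negating the supposition yields the claim.

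The main obstacle I anticipate is the precise alignment between the informal phrase ``error propagation path from the original fault to the observable failure'' and the operational object that makes the RIPR argument rigorous, namely the dynamic backward slice of the failing revealing point. The delicate point is to justify the equivalence ``$s_c$ influences the observed failure value'' $\iff$ ``$s_c$ lies on such a propagation path,'' and to rule out the case in which the mutation at $s_c$ manifests a difference through a channel seemingly disjoint from the fault's chain: in a failing run the oracle value is erroneous precisely because of $s_e$, so any channel that alters this value is, by construction, a propagation path co-terminating at the same observable failure, and must be counted among the fault's error propagation paths. Making this counting explicit---and stating the weak-killing semantics under which ``killed'' means an observable difference at the test oracle---is where the argument needs the most care.
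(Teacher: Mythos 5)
Your overall route coincides with the paper's: the paper also argues by contradiction, assuming $s_c$ lies on no error propagation path $\pi: s_f \rightsquigarrow \text{output}$ and concluding that the mutant's local perturbation $E_m$ cannot interact with the fault's error state $E_f$, so $t_f$ exhibits identical failure behavior on $P$ and on $m_c$, contradicting $\mathcal{K}(m_c, t_f)$. Your refinements---fixing a revealing point, operationalizing ``error propagation path'' as a dynamic dependency chain into it, and explicitly flagging the disjoint-channel case---are sharpenings of that same argument; indeed, the disjoint-channel worry you raise is precisely the step the paper's proof asserts without justification (its claim that the final output ``remains determined solely by the error propagation from $s_f$'').

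There is, however, one concrete error in your write-up: you infer from the non-repair property that ``$t_f$ still fails on $m_c$,'' so that the only available kill is a changed failing value at the revealing point. This contradicts the paper's own framework: FLIMs can be \emph{strongly} killed via $f \to p$ transitions, exactly the error-masking/coincidental-correctness behavior the paper analyzes at length, where $E_m$ counteracts $E_f$ and the failing test passes despite the fault remaining unrepaired. Non-repair is a semantic, program-level property and does not imply outcome preservation on any particular test. Fortunately the misstep is excisable: your contrapositive needs only that any kill---whether $f \to p$ or $f \to f'$---manifests as an observable difference, which under your no-path assumption the mutation cannot produce at the revealing point. Note also that your definitional fix for the delicate point (counting every channel that alters the oracle's observed value as a propagation path) covers only differences surfacing at the failing oracle's revealing point; under the paper's weak-killing semantics (``any observable difference in execution behavior''), a mutation-induced difference observable elsewhere in the run would kill $m_c$ without placing $s_c$ on any such path. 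That residual gap is shared by the paper's own proof rather than introduced by yours, but a fully rigorous version of the corollary would need to either restrict killing to the revealing point or broaden the notion of propagation path accordingly.
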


\begin{proof}
Let $s_f$ denote the original faulty statement and $s_c$ denote the non-faulty statement where FLIM $m_c$ is generated. For $m_c$ to be killed by failing test $t_f$ (i.e., $\mathcal{K}(m_c, t_f)$), the mutation must create observable behavioral differences when $t_f$ executes the mutated program compared to the original faulty program.

Since $t_f$ fails on the original program $P$, there exists an error propagation path $\pi: s_f \rightsquigarrow \text{output}$ where the fault at $s_f$ propagates through a sequence of program statements to produce the observable failure. For the mutation at $s_c$ to affect the test outcome, it must interact with this error propagation process.

We prove by contradiction. Assume $s_c \notin \pi$ for all error propagation paths $\pi$ from $s_f$ to the failure output. This means $s_c$ is not involved in any computation that processes, transforms, or transmits the error state originated from $s_f$. In this case:

\begin{enumerate}
    \item The error state $E_f$ produced by $s_f$ propagates to the output independently of any computation at $s_c$
    \item The mutation at $s_c$ produces a local state change $E_m$ that cannot interact with $E_f$ during program execution
    \item The final program output remains determined solely by the error propagation from $s_f$, unaffected by the mutation at $s_c$
\end{enumerate}

Therefore, $t_f$ would produce identical failure behaviors on both the original program and the mutated program, contradicting the assumption that $\mathcal{K}(m_c, t_f)$. Hence, $s_c$ must reside on at least one error propagation path from $s_f$ to the observable failure.
\end{proof}

This corollary reveals a fundamental constraint on FLIMs locations: they cannot arbitrarily appear at any non-faulty statement, but are inherently limited to those statements that participate in error propagation processes. This insight has important implications for FLIMs recognition, as it suggests that analyzing the dynamic error propagation paths can help identify potential FLIM locations and distinguish them from fault-irrelevant mutants.

\subsection{FLIM Interference Analysis}
\label{subsec:flim_analysis}

\subsubsection{RIPR Model: From Testing to Fault Localization}

To understand FLIMs interference root causes, we first introduce the Reachability, Infection, Propagation, Revealability (RIPR) model~\cite{li2016test}, which provides a comprehensive framework for analyzing how faults manifest as observable failures. The RIPR model, evolved from the earlier PIE model~\cite{voas1992pie}, identifies four necessary conditions for a test to detect a fault: the test execution must reach the faulty code location (Reachability), the fault must create distinguishable erroneous program states (Infection), the infected states must propagate to observable program outputs (Propagation), and the test oracle must detect and report the propagated errors (Revealability).

While RIPR has been extensively applied in mutation testing for test case design~\cite{li2016test, hierons2009mutation, fraser2012mutation}, in this paper, we introduce its application to MBFL for analyzing interference root causes. To distinguish between these two application contexts, we denote the RIPR model in the software testing field as RIPR$_{ST}$ and our fault localization-adapted RIPR model as RIPR$_{FL}$.

In the RIPR$_{FL}$ model, we focus on understanding how mutants interact with existing program faults during test execution. Specifically, a mutant $m$ satisfies the RIPR$_{FL}$ conditions with respect to test $t$ when: (1) \textsc{Reach}$_{FL}(m,t)$: test $t$ executes the mutant's modified code location, (2) \textsc{Infect}$_{FL}(m)$: the mutant creates distinguishable program state differences compared to the original program execution, (3) \textsc{Propagate}$_{FL}(m)$: the infected state differences propagate to observable program outputs or behaviors, and (4) \textsc{Reveal}$_{FL}(m,t)$: test $t$ detects the propagated differences through either strong killing (pass/fail changes) or weak killing (observable behavioral changes). This can be formally expressed as $\mathcal{K}_{FL}(m,t) \iff \textsc{Reach}_{FL}(m,t) \land \textsc{Infect}_{FL}(m) \land \textsc{Propagate}_{FL}(m) \land \textsc{Reveal}_{FL}(m,t)$.

This adaptation requires clarifying the fundamental differences in how RIPR applies to mutation testing versus fault localization contexts. In mutation testing, RIPR is used \emph{prescriptively} as a design framework. Test engineers use it to construct test cases that can successfully traverse all four phases to kill mutants. The goal is \emph{forward causation}: designing inputs that can trigger a complete RIPR chain from a potential fault to an observable failure.

In contrast, in this paper, we apply RIPR \emph{analytically} as a diagnostic framework in the fault localization context. We start with an already confirmed failure (revealability condition satisfied) and work backwards through the propagation traces to identify the original infection source. The goal is \emph{reverse causation}: analyzing existing execution data to infer the most likely fault location. The model serves as an explanatory lens for understanding why certain program elements appear suspicious.

This contextual shift fundamentally changes the interpretation of each RIPR phase. In fault localization, we assume that failing tests have already established complete RIPR chains, and our task is to distinguish between elements that serve as infection sources (i.e., actual faults) versus those that merely participate in propagation pathways (i.e., non-faulty statements in the error transmission route).

\subsubsection{FLIM Interference Root Causes Analysis}

FLIMs represent a problematic class of mutants that satisfy the RIPR$_{FL}$ conditions for failing tests while preserving the original program fault. Their interference occurs through systematic exploitation of the complex interactions between the original fault and the introduced mutation during the RIPR$_{FL}$ phases. We identify the primary interference root causes based on where and how these interactions manifest:

\textbf{Reach-based Interference} occurs when FLIMs alter the program's control flow in a way that changes the execution path taken by failing tests. The mutant may introduce new conditional branches, modify loop conditions, or alter function call sequences. When a failing test $t_f$ executes the mutated program, it may follow a different execution path that either bypasses the original fault entirely or reaches it under different program states. If this alternative path leads to correct output (strong killing: $f \to p$) or different incorrect output (weak killing: $f \to f'$), the mutant appears to ``fix" or ``change" the failure, creating a false indication that the mutated location is fault-relevant.

\textbf{Infection-based Interference} operates at the program state level, where FLIMs introduce state changes that interact with the error states produced by the original fault. When both the original fault and the mutant are executed by the same failing test, they create a composite error state. The key insight is that these two error sources can exhibit \emph{error masking} behavior - the state changes introduced by the mutant can neutralize, overwrite, or redirect the error states caused by the original fault. This masking effect can occur through direct variable overwrites, computational cancellations (e.g., $+1$ and $-1$ effects), or logical contradictions that resolve to correct values.

\textbf{Propagation-based Interference} targets the error transmission pathways through which infected states flow to program outputs. FLIMs can alter how errors propagate by modifying intermediate computations, changing data flow behaviors, or redirecting information through different program variables or structures. The mutant essentially creates alternative propagation channels that can either block the original error from reaching outputs (resulting in correct final states) or route it through different computational paths that transform the error signature. When the propagation interference successfully alters the final program behavior, it can lead to strong killing ($f \to p$) if the error is completely neutralized, or weak killing ($f \to f'$) if the error signature is transformed but still detectable by the test oracle.

\textbf{Revealability-based Interference} exploits weaknesses or gaps in test oracles and assertion coverage. FLIMs can change program behavior in ways that satisfy the existing test assertions while preserving the underlying logical fault. This occurs when the mutant alters program outputs to values that happen to pass the current test checks, even though the computation leading to these outputs remains fundamentally flawed. The interference leverages the fact that test oracles often check only specific aspects of program behavior, leaving other aspects unverified.

\subsubsection{FLIM Interference Behavior Analysis}

In practice, FLIMs exhibit disruptive behavior. Although four causes of interference were identified in the previous subsection, FLIMs typically combine multiple causes simultaneously in real-world scenarios, creating complex interference patterns that are particularly challenging for MBFL techniques to handle.

\textbf{Single Cause Scenarios:}
When FLIMs operate through a single interference root cause, their behavior is relatively straightforward to understand. For example, a reach-based interference mutant may simply redirect program execution to bypass the original fault, leading to correct output for failing tests. Similarly, a revealability-based interference mutant may alter program outputs just enough to satisfy test assertions while preserving the underlying fault.

\textbf{Multiple Causes Combinations:}
More problematic are FLIMs that combine multiple interference causes. For instance, a mutant may simultaneously exhibit infection-based and propagation-based interference: it introduces state changes that interact with the original fault's error states (infection-based), while also altering how these composite error states flow through the program (propagation-based). Such combined interference creates a cascading effect where the mutant's impact is amplified across multiple RIPR$_{FL}$ phases.

To illustrate this complexity, consider a failing test $t_f$ executing a mutant $m_c$ that combines infection and propagation interference. The test encounters two error sources: the original fault $s_f$ producing error state $E_f$, and the mutant $m_c$ producing state perturbation $E_m$. During execution, these error states interact in ways that can neutralize each other. The mutant's perturbation $E_m$ may systematically counteract the fault's error state $E_f$, leading to correct final output despite both error sources being active.

This is referred to as \emph{coincidental correctness}, whereby the program reaches correct answers through incorrect computations. Specifically, the incorrect computations arise from the superposition of two types of errors: the mutant-introduced faults and the program's original faults. When these compound errors interact during program execution, they can accidentally cancel each other out, producing correct results despite the presence of multiple faults. For MBFL techniques like MUSE that rely on strong killing signals ($f \to p$ transitions) as evidence of fault repair, such coincidental correctness represents a fundamental challenge. The technique incorrectly interprets the mutant location as fault-relevant when it actually introduces additional faults.

\textbf{Challenges for MBFL:}
The complex behavioral patterns of FLIMs pose significant challenges for current MBFL approaches, fundamentally undermining their effectiveness by creating misleading diagnostic indicators that traditional fault localization techniques struggle to distinguish from real fault indicators. The complexity of FLIMs interference, particularly when multiple causes combine, makes them difficult to detect and mitigate using conventional dynamic analysis approaches, which rely primarily on execution monitoring and provide limited semantic understanding of why certain mutants exhibit interfering behavior. This limitation motivates the need for more sophisticated analysis techniques that can leverage semantic understanding of program logic and fault patterns.

\section{FLIM Recognition and Mitigation}
\label{sec:flims-recognition-mitigation}

To alleviate the adverse impact of FLIMs on MBFL effectiveness, this section presents our approach for FLIM recognition and mitigation. (1) LLM-based FLIM Recognition, which utilizes the semantic understanding capabilities of LLMs to identify FLIMs; (2) FLIM Mitigation, which refines the suspiciousness values of mutants based on the identified FLIMs to improve fault localization effectiveness.

\subsection{LLM-based FLIM Recognition}
\label{subsec:llm-flims-recognition}

The recognition method is based on LLM and comprises three components that work synergistically: (1) \emph{Recognition Prompt Engineering}, which designs structured prompts to guide LLMs in performing accurate FLIM classification; (2) \emph{Fine-tuning}, an optional enhancement that adapts LLMs to domain-specific FLIM characteristics through supervised learning; and (3) \emph{Confidence Estimation}, a further optional component that provides independent confidence estimation algorithms for FLIM recognition results.

\subsubsection{Prompt Engineering}
\label{subsubsec:recognition-prompt-engineering}

Prompt Engineering forms the foundation of our LLM-based approach, focusing on the design and construction of structured prompts that guide LLMs to accurately distinguish FLIMs from fault-revealing mutants. This component operates by extracting features from test execution and mutant analysis, constructing structured prompts, and processing LLM outputs to generate FLIM recognition results.

\paragraph{Feature Extraction from Test Execution and Mutant Analysis}

To enable effective FLIM recognition, we extract both static and dynamic analysis features from the MBFL process. Let $\mathcal{A}(m_i, T)$ denote the feature extraction function that produces a feature vector for mutant $m_i$ given test suite $T$. The extracted features include:

\begin{itemize}
\item \emph{Static Features}: For each mutant $m_i$, we extract the mutation location $\text{loc}(m_i)$, the original code snippet $\text{code}_{\text{orig}}(m_i)$, and the mutated code snippet $\text{code}_{\text{mut}}(m_i)$. The code difference is represented as $\Delta(m_i) = \text{diff}(\text{code}_{\text{orig}}(m_i), \text{code}_{\text{mut}}(m_i))$.

\item \emph{Dynamic Features}: From test execution, we collect the original program's failing test information $\mathcal{O}_{\text{fail}}(P, T_f) = \{(t, \text{error}(t), \text{trace}(t)) : t \in T_f\}$, where $\text{error}(t)$ and $\text{trace}(t)$ represent the error message and stack trace for failing test $t$. Similarly, for mutant $m_i$, we obtain $\mathcal{O}_{\text{fail}}(m_i, T_f)$.

\item \emph{Behavioral Differences}: We compute the differences in test execution behaviors between the original program and mutant, including error message differences $\Delta_{\text{error}}(m_i, t) = \text{diff}(\text{error}_P(t), \text{error}_{m_i}(t))$ and stack trace differences $\Delta_{\text{trace}}(m_i, t) = \text{diff}(\text{trace}_P(t), \text{trace}_{m_i}(t))$ for each failing test $t$.
\end{itemize}

\paragraph{Prompt Design and Construction}

The extracted features are organized into a structured prompt $\mathcal{P}(m_i)$ that guides the LLM to perform FLIM recognition. Our prompt design follows a systematic approach that includes role definition, task specification, feature presentation, and output formatting. The prompt structure is designed to leverage the LLM's semantic understanding while providing sufficient context for accurate FLIM classification.

The prompt construction function can be formalized as:
\begin{equation}
\mathcal{P}(m_i) = \text{construct}(\mathcal{A}(m_i, T), \text{template})
\end{equation}

where $\text{template}$ defines the structured format for presenting mutant information and task requirements to the LLM. Figure~\ref{fig:prompt-template} shows the details of our designed FLIM recognition prompt template.

\begin{figure}[htb]
\centering
\includegraphics[width=0.48\textwidth]{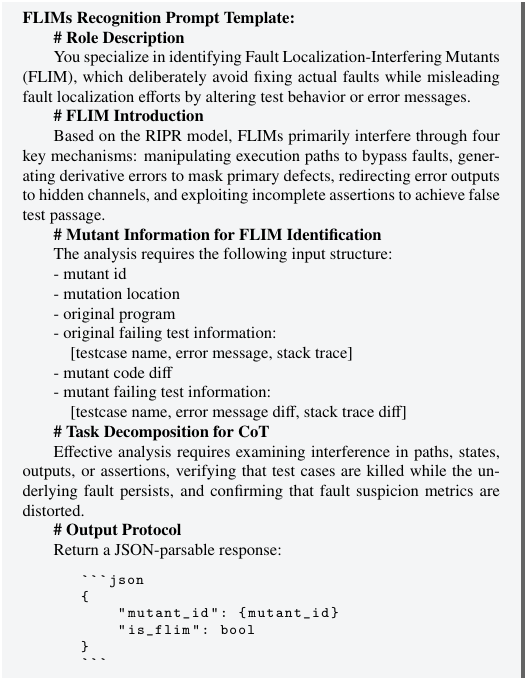}
\caption{FLIM Recognition Prompt Template}
\label{fig:prompt-template}
\end{figure}

\paragraph{LLM Processing and Result Generation}

Given the constructed prompt $\mathcal{P}(m_i)$, the LLM processes the information and generates a response $\mathcal{R}(m_i)$. The response is then parsed to extract the FLIM recognition result. Let $\psi: \mathcal{R}(m_i) \rightarrow \{0, 1\}$ denote the parsing function that converts the LLM output into a binary classification result, where 1 indicates that the mutant is identified as a FLIM and 0 indicates otherwise.

The complete LLM-based recognition process can be expressed as:
\begin{equation}
\text{result}(m_i) = \psi(\text{LLM}(\mathcal{P}(m_i)))
\end{equation}

The recognition results for all mutants form a binary vector $\mathbf{r} = [\text{result}(m_1), \text{result}(m_2), \ldots, \text{result}(m_{|\mathcal{M}|})]^T \in \{0,1\}^{|\mathcal{M}|}$, which serves as the foundation for subsequent confidence estimation and suspiciousness refinement.

\subsubsection{Fine-tuning}
\label{subsubsec:fine-tuning}

The Fine-tuning component is an optional enhancement designed to address the insufficient sensitivity of general-purpose LLMs to subtle code semantic changes that may affect FLIM recognition accuracy. This component adapts pre-trained language models to domain-specific FLIM characteristics through supervised learning, thereby improving recognition performance on software fault localization tasks.

\paragraph{Fine-tuning Dataset Construction}

The construction of the fine-tuning dataset is crucial for FLIM recognition. We partition the dataset into training and testing sets and use the training set to fine-tune the LLM. Let $\mathcal{D}_{\text{train}} = \{(m_i, y_i)\}_{i=1}^{N}$ denote the training dataset, where $m_i$ represents a mutant with its associated features and $y_i \in \{0, 1\}$ is the corresponding FLIM label. 

The ground truth labeling of FLIMs is performed using MBFL suspiciousness scores and fault location information. Specifically, for each program $P$ in the training set with known fault locations $\mathcal{F}$, we generate the complete mutant set $\mathcal{M}(P)$ and compute their suspiciousness scores using MBFL techniques. 
The FLIM labeling strategy is defined as follows: a mutant $m \in \mathcal{M}(P)$ is labeled as a FLIM ($y = 1$) if its suspiciousness score is non-zero and it is located on non-faulty program entities, otherwise it is labeled as non-FLIM ($y = 0$). 
This criterion labels mutants that receive suspiciousness scores from MBFL techniques but do not belong to the actual fault locations, which may exhibit interference that affects fault localization accuracy.

\paragraph{Fine-tuning Process}

The fine-tuning process employs Low-Rank Adaptation (LoRA) to efficiently adapt the pre-trained LLM while avoiding overfitting. The fine-tuning objective can be formalized as:

\begin{equation}
\theta^* = \arg\min_{\theta} \sum_{(m_i, y_i) \in \mathcal{D}_{\text{train}}} \mathcal{L}_{\text{CE}}(f_{\theta}(\mathcal{P}(m_i)), y_i) + \lambda \|\Delta\theta\|_2^2
\end{equation}

where $f_{\theta}$ represents the LLM with parameters $\theta$, $\mathcal{L}_{\text{CE}}$ is the cross-entropy loss function, $\Delta\theta$ denotes the parameter updates, and $\lambda$ is the regularization coefficient. The LoRA technique restricts updates to low-rank matrices, significantly reducing computational requirements while maintaining model performance.

\subsubsection{Confidence Estimation}
\label{subsubsec:confidence-estimation}

The Confidence Estimation component addresses the inherent instability of LLM outputs by providing independent confidence estimation algorithms for FLIM recognition results. This component is designed as a modular and ablatable component, where each confidence estimation algorithm operates independently to tackle the challenge that LLMs may produce inconsistent outputs when processing the same input multiple times.

\paragraph{Multi-Decision Matrix Construction}

To capture the uncertainty in LLM decisions, we perform multiple inference runs for each mutant. Let $\mathbf{R} \in \{0,1\}^{|\mathcal{M}| \times K}$ denote the multi-decision matrix, where $K$ represents the number of repeated inference runs. Each element $R_{i,j}$ represents the FLIM recognition result for mutant $m_i$ in the $j$-th inference run:

\begin{equation}
R_{i,j} = \psi(\text{LLM}_j(\mathcal{P}(m_i)))
\end{equation}

where $\text{LLM}_j$ denotes the $j$-th inference run and $\psi$ is the result parsing function defined earlier. The multi-decision matrix captures the variability in LLM outputs and serves as the foundation for confidence estimation.

\paragraph{Confidence Estimation Algorithms}

We provide three independent algorithms for computing confidence scores from the multi-decision matrix. Each algorithm represents a distinct approach to measuring decision reliability and can be used independently as an ablatable component.

\emph{Entropy-based Weighting (EBW):} This method leverages information entropy to measure the uncertainty across different inference runs. For each mutant $m_i$, we first compute the probability distribution over decisions:

\begin{equation}
p_i^{(0)} = \frac{1}{K}\sum_{j=1}^{K}(1-R_{i,j}), \quad p_i^{(1)} = \frac{1}{K}\sum_{j=1}^{K}R_{i,j}
\end{equation}

The entropy for mutant $m_i$ is then calculated as:

\begin{equation}
H_i = -p_i^{(0)}\log p_i^{(0)} - p_i^{(1)}\log p_i^{(1)}
\end{equation}

The confidence score is derived from the normalized entropy:

\begin{equation}
\phi_{\text{EBW}}(m_i) = 1 - \frac{H_i}{\log 2}
\end{equation}

where higher entropy indicates lower confidence and vice versa.

\emph{Principal Component Analysis (PCA):} This method reduces the dimensionality of the decision matrix while preserving the most informative variance. We apply PCA to the multi-decision matrix $\mathbf{R}$ to obtain the first principal component:

\begin{equation}
\mathbf{w} = \arg\max_{\|\mathbf{w}\|=1} \text{Var}(\mathbf{R}\mathbf{w})
\end{equation}

The confidence scores are computed by projecting each row onto the principal component and normalizing:

\begin{equation}
\phi_{\text{PCA}}(m_i) = \text{MinMaxScale}(\mathbf{R}_i \cdot \mathbf{w})
\end{equation}

where $\mathbf{R}_i$ represents the $i$-th row of the decision matrix and MinMaxScale normalizes the values to the $[0,1]$ interval.

\emph{Count-based Weighting (CBW):} This method directly uses the proportion of positive decisions as the confidence score:

\begin{equation}
\phi_{\text{CBW}}(m_i) = \frac{1}{K}\sum_{j=1}^{K}R_{i,j}
\end{equation}

This approach provides an intuitive measure of confidence based on the consistency of decisions across multiple runs.

Each of three algorithms can be selected independently based on experimental requirements, making the confidence estimation component highly flexible and suitable for ablation studies.

\subsection{FLIM Mitigation}
\label{subsec:flims-mitigation}

In order to mitigate the adverse impact of FLIMs on MBFL, it is necessary to refine the suspiciousness values of the FLIMs, i.e., FLIM Mitigation.

For each mutant $m_i \in \mathcal{M}(P)$, we refine its original suspiciousness score $\text{Sus}_{\text{orig}}(m_i)$ using a straightforward scaling approach based on the FLIM recognition results. The refinement strategy differs depending on whether confidence estimation is employed:

\emph{Basic Version (Binary FLIM Results):} When using only the LLM-based FLIM recognition without confidence estimation, the refinement is performed using binary results:

\begin{equation}
\text{Sus}_{\text{ref}}(m_i) = \text{Sus}_{\text{orig}}(m_i) \times (1 - \text{result}(m_i))
\end{equation}

where $\text{result}(m_i) \in \{0, 1\}$ is the binary FLIM recognition result. This approach completely eliminates the suspiciousness of mutants identified as FLIMs while preserving the original suspiciousness of non-FLIMs.

\emph{Confidence-enhanced Version:} When employing confidence estimation, the refinement incorporates the confidence scores:

\begin{equation}
\text{Sus}_{\text{ref}}(m_i) = \text{Sus}_{\text{orig}}(m_i) \times (1 - \phi(m_i))
\end{equation}

where $\phi(m_i) \in [0,1]$ represents the confidence score from any of the three independent confidence estimation algorithms. This approach provides a more nuanced refinement that gradually reduces suspiciousness based on the confidence level of FLIM recognition.

After refining mutant-level suspiciousness scores, we aggregate them to compute the final suspiciousness for each program entity $e \in \mathcal{E}$. Let $\mathcal{M}_e = \{m \in \mathcal{M}(P) : \text{loc}(m) = e\}$ denote the set of mutants located at program entity $e$. We employ the maximum aggregation strategy, which has proven effective in highlighting the most suspicious mutants:

\begin{equation}
\text{Sus}_{\text{final}}(e) = \max_{m \in \mathcal{M}_e} \text{Sus}_{\text{ref}}(m)
\end{equation}

This aggregation method ensures that program entities containing at least one highly suspicious (non-FLIMs) mutant receive high suspiciousness scores, while entities containing only FLIMs receive reduced scores.

Finally, program entities are ranked based on their refined suspiciousness scores to produce the fault localization result. This design supports ablation studies and allows researchers to assess the effectiveness of different FLIM recognition approaches independently.

\section{MBFL-FLIM}
\label{sec:mbfl-flim-framework}

This section presents the MBFL-FLIM framework, which integrates the LLM-based FLIM recognition and mitigation techniques into the MBFL framework.

\subsection{MBFL-FLIM Framework}
\label{subsec:framework-overview}

The MBFL-FLIM framework is designed to seamlessly integrate FLIM recognition and mitigation capabilities into existing MBFL workflows. The framework is illustrated in Figure~\ref{fig:Workflow of MBFL-FLIM}.

\begin{figure*}[htb]
    \centering
    \centering\includegraphics[width=18cm]{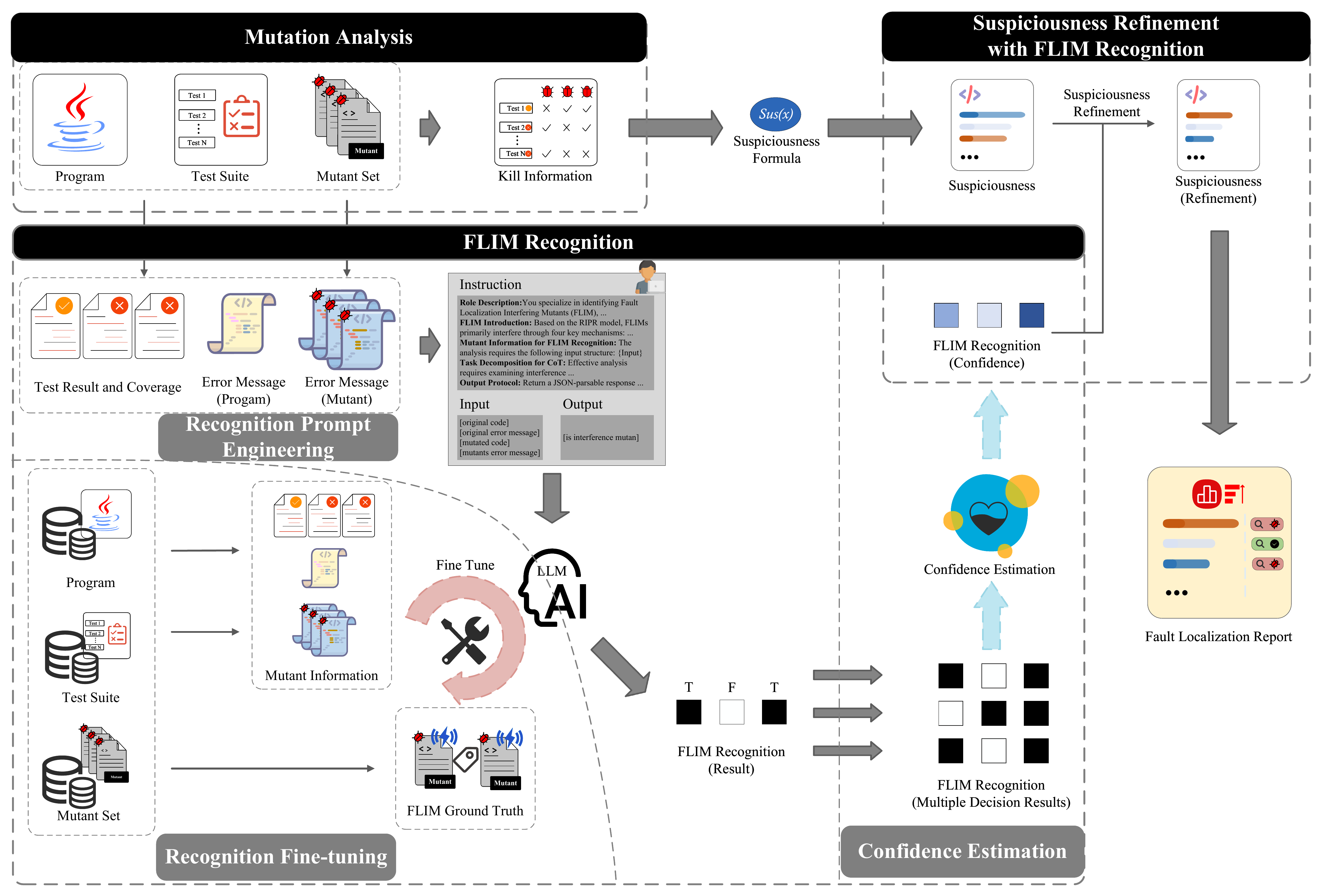}
    \caption{Workflow of MBFL-FLIM}
    \label{fig:Workflow of MBFL-FLIM}
\end{figure*}

The framework begins with standard MBFL test execution and mutant analysis to extract necessary code, mutant, and test features for FLIM recognition. The LLM-based FLIM recognition then processes these features through carefully designed prompts to distinguish FLIMs. The fine-tuning component enhances model performance using domain-specific training data labeled based on known MBFL results. The confidence estimation component processes multiple LLM decisions to generate stable confidence scores. Finally, the FLIM mitigation component refines mutant suspiciousness values using the FLIM recognition results and confidence scores to improve the fault localization effectiveness.

\subsection{MBFL-FLIM Algorithm}
\label{subsec:mbfl-flim-algorithm}

Algorithm~\ref{alg:MBFL-FLIM} formalizes the MBFL-FLIM framework by integrating the FLIM recognition and mitigation components into the standard MBFL procedure. The algorithm extends traditional MBFL with FLIM recognition capabilities and confidence-based suspiciousness refinement.

\begin{algorithm}[!htb]
\caption{The MBFL-FLIM Algorithm}
\label{alg:MBFL-FLIM}
\small
\begin{algorithmic}[1]
\allowdisplaybreaks
\Require{Program $P$, Test suite $T$, Mutation operators $\mathcal{O}$, LLM}
\Ensure{Ranked list $\mathcal{R}_{\text{MBFL-FLIM}}$ of program entities}

\State \textbf{Step 1: Test Execution and Classification}
\State Execute test suite $T$ on original program $P$
\State Record test outcomes: $\text{outcome}(P, T)$
\State Classify tests: $T_{\text{pass}} = \{t \in T : \text{outcome}(P, t) = \text{pass}\}$
\State $T_{\text{fail}} = \{t \in T : \text{outcome}(P, t) = \text{fail}\}$
\State Extract covered entities: $\mathcal{E} = \{e : e \in \text{Coverage}(P, T_{\text{fail}})\}$
\State \textbf{Step 2: Mutant Generation and Execution}
\State Generate mutant set: $\mathcal{M}(P) = \bigcup_{\mu \in \mathcal{O}} \mu(P, \mathcal{E})$
\For{each mutant $m_i \in \mathcal{M}(P)$}
    \State Execute $T$ on $m_i$ and record outcomes: $\text{outcome}(m_i, T)$
    \State Record mutant location: $\text{loc}(m_i)$
    \State Compute kill matrix: $\text{Kill}(m_i, t_j)$ for all $t_j \in T$
\EndFor
\State \textbf{Step 3: FLIM Recognition}
\State \textbf{\textit{Step 3.1: Recognition Prompt Engineering}}
\For{each mutant $m_i \in \mathcal{M}(P)$}
    \State Extract features: $\mathcal{A}(m_i, T)$
    \State $\mathcal{P}(m_i) = \text{construct}(\mathcal{A}(m_i, T), \text{template})$
    \State $\text{result}(m_i) = \psi(\text{LLM}(\mathcal{P}(m_i)))$
\EndFor
\State \textbf{\textit{Step 3.2: Fine-tuning (Optional)}}
\If{fine-tuning enabled}
    \For{each mutant $m_i \in \mathcal{M}(P)$}
        \State $\text{result}(m_i) = \psi(\text{LLM}_{\text{tuned}}(\mathcal{P}(m_i)))$
    \EndFor
\EndIf
\State \textbf{\textit{Step 3.3: Confidence Estimation (Optional)}}
\If{confidence estimation enabled}
    \For{each mutant $m_i \in \mathcal{M}(P)$}
        \State $\mathbf{R}_{i,j} = \psi(\text{LLM}_j(\mathcal{P}(m_i)))$ for $j = 1, \ldots, K$
        \State $\phi(m_i) = \text{ConfidenceAlgorithm}(\mathbf{R}_i)$
    \EndFor
\EndIf
\State \textbf{Step 4: MBFL Suspiciousness Computation}
\For{each mutant $m_i \in \mathcal{M}(P)$}
    \State Compute killing statistics: $a_{np}, a_{kp}, a_{nf}, a_{kf}$
    \State $\text{Sus}_{\text{orig}}(m_i) = \text{MBFL}(a_{np}, a_{kp}, a_{nf}, a_{kf})$
\EndFor
\State \textbf{Step 5: FLIM Mitigation}
\For{each mutant $m_i \in \mathcal{M}(P)$}
    \If{confidence estimation enabled}
        \State $\text{Sus}_{\text{ref}}(m_i) = \text{Sus}_{\text{orig}}(m_i) \times (1 - \phi(m_i))$
    \Else
        \State $\text{Sus}_{\text{ref}}(m_i) = \text{Sus}_{\text{orig}}(m_i) \times (1 - \text{result}(m_i))$
    \EndIf
\EndFor
\State \textbf{Step 6: Program Entity Aggregation and Ranking}
\For{each program entity $e \in \mathcal{E}$}
    \State $\mathcal{M}_e = \{m \in \mathcal{M}(P) : \text{loc}(m) = e\}$
    \State $\text{Sus}_{\text{final}}(e) = \max_{m \in \mathcal{M}_e} \text{Sus}_{\text{ref}}(m)$
\EndFor
\State \Return $\mathcal{R}_{\text{MBFL-FLIM}} = \text{Sort}(\mathcal{E}, \text{key}=\text{Sus}_{\text{final}}, \text{order}=\text{desc})$

\end{algorithmic}
\end{algorithm}

The algorithm begins with standard MBFL steps (Steps 1-2) for test execution and mutant generation, then introduces the FLIM Recognition (Step 3) that combines recognition prompt engineering, fine-tuning enhancement, and confidence estimation to accurately recognize interference mutants. Steps 4-5 integrate FLIM recognition results with MBFL suspiciousness calculation through the mitigation process, while Step 6 aggregates refined mutant-level suspiciousness scores to program entity level using maximum aggregation.

\section{Experimental Design}
\label{sec:experimental_design}

To evaluate the effectiveness of MBFL-FLIM, we design a systematic experimental study. This section presents our research questions, subject programs, evaluation metrics, and experimental configuration.

\subsection{Research Questions}

\textbf{RQ1: How does the fault localization effectiveness of MBFL-FLIM compare to that of baseline methods?}

RQ1 aims to evaluate the overall effectiveness of MBFL-FLIM by comparing its fault localization performance with that of two basic fault localization methods (SBFL~\cite{abreu2006evaluation} and MBFL~\cite{papadakis2015metallaxis}), three advanced dynamic feature based fault localization methods (Delta4Ms~\cite{liu2024delta4ms, du2022improving}, BLMu~\cite{wang2025systematic} and SmartFL~\cite{wu2025smartfl}) and two LLM-based FL methods (LLMAO~\cite{yang2024large} and LLMFL~\cite{wu2023large}).

\textbf{RQ2: How does MBFL-FLIM perform in multi-fault scenarios?}

RQ2 aims to investigate the robustness of MBFL-FLIM in more challenging multi-fault scenarios, where multiple faults coexist within the same program. Multi-fault localization presents additional complexity as interference effects may be amplified, and traditional methods often struggle to maintain effectiveness. This research question aims to evaluate whether MBFL-FLIM can maintain its performance when dealing with programs containing multiple faults.

\textbf{RQ3: Whether the fine-tuning and confidence estimation components can further improve the performance of FLIM Recognition in MBFL-FLIM?}

RQ3 aims to conduct an ablation study to decompose the performance gains achieved by MBFL-FLIM into its constituent components. By systematically removing the fine-tuning component and the confidence estimation component, we aim to quantify the individual contribution of each enhancement to the overall effectiveness. This analysis evaluates the four variants of MBFL-FLIM: MBFL-FLIM I (basic version with LLM-based recognition only), MBFL-FLIM II$_{F}$ (adding fine-tuning), MBFL-FLIM II$_{C}$ (adding confidence estimation), and MBFL-FLIM (incorporating both enhancements), providing insights into which components are most critical for achieving optimal performance.

\textbf{RQ4: How do different confidence estimation algorithms impact the performance of MBFL-FLIM?}

RQ4 aims to investigate the effectiveness of various confidence estimation algorithms within the confidence estimation component. We aim to compare three independent algorithms: entropy-based weighting (EBW), count-based weighting (CBW), and principal component analysis (PCA) approaches to determine which algorithm provides the most effective confidence measurement for FLIM recognition results. This analysis helps determine the optimal strategy for measuring decision reliability and generating stable confidence scores from multiple LLM inference runs.

\begin{table}[htb]
\centering
\renewcommand{\arraystretch}{1.1}
\caption{Statistics of Subject Programs} 
\label{table-data}
\begin{tabular}{lcccc}
\toprule
\multicolumn{1}{c}{\multirow{2}{*}{\textbf{Subject}}}
& \multicolumn{1}{c}{\multirow{2}{*}{\textbf{\#Versions}}}
& \multicolumn{3}{c}{\multirow{1}{*}{\textbf{Avg.}}}
\\ \cmidrule{3-5}
& & \textbf{\#TestCases} & \textbf{\#kLOC} & \textbf{\#Mutants} \\
\midrule
Chart       & 26  & 1,814 & 203 & 2,205 \\
Closure     & 133 & 7,027 & 139 & 7,927 \\
Lang        & 65  & 1,815 &  52 & 2,245 \\
Math        & 106 & 2,513 & 116 & 3,602 \\
Mockito     & 38  & 1,140 &  19 & 1,366 \\
Time        & 27  & 3,918 &  68 & 4,130 \\
\midrule
Total       & 395 & 18,227 & 597 & 21,475 \\
\bottomrule
\end{tabular}

\end{table}

\subsection{Subject Programs}

We conduct our experiments using the Defects4J benchmark~\cite{just2014defects4j}, a widely recognized and extensively utilized collection of real-world faults from major open-source Java applications. Defects4J has been specifically designed to support software engineering research and has been employed across various domains including test case generation, fault localization, automated program repair, and regression testing. To facilitate comparison with previous research~\cite{yang2024large}, we select Defects4J v1.2.0 the evaluation dataset version. Table~\ref{table-data} presents detailed statistics for each subject program, including the number of versions, test cases, functions, and mutants. The dataset collectively encompasses 395 program versions with 21,333 test cases, 26,904 functions, and 21,475 mutants, providing a diverse evaluation foundation. This substantial dataset ensures robust statistical analysis and enables thorough assessment of MBFL-FLIM's performance across different program characteristics and fault types.

\subsection{Evaluation Metrics}
\label{subsec:Evaluation Metrics}

To evaluate the effectiveness of fault localization techniques, we employ a set of widely adopted metrics that capture different aspects of localization performance. These metrics provide both absolute and relative measures of effectiveness, enabling thorough comparison across different approaches.

\textbf{Top-N}~\cite{zou2019empirical}: The Top-N metric quantifies the number of faults successfully localized within the top N ranks of the suspiciousness-ordered list. This metric reflects the practical utility of fault localization techniques, as developers typically examine only a limited number of highly-ranked program entities. Based on empirical studies indicating that developers commonly inspect only the top-ranked entities, we evaluate Top-1, Top-3, and Top-5 performance to capture both precision and practical applicability.

\textbf{MFR (Mean First Rank)}~\cite{li2017transforming}: For programs containing multiple faulty components, identifying the location of the first faulty entity is of paramount importance for debugging efficiency. MFR measures the average rank of the first faulty entity across all evaluated faults, providing insight into how quickly developers can locate at least one fault. The metric is calculated as:

\begin{equation}
    \label{MFR}
    MFR = \frac{1}{N} \sum_{i=1}^{N} \text{Rank}_{i1}
\end{equation}

where $N$ represents the total number of faults and $\text{Rank}_{i1}$ denotes the rank of the first faulty entity for the $i$-th fault.

\textbf{EXAM (Examination Score)}~\cite{pearson2017evaluating, liu2018optimal}: The EXAM metric quantifies the proportion of program entities that developers must examine before locating the faulty component. This metric provides a normalized measure of debugging effort, where lower values indicate more efficient fault localization. EXAM is defined as:

\begin{equation}
    \label{EXAM}
    EXAM = \frac{rank}{number\ of\ executable\ statements}
\end{equation}

where the numerator represents the rank of the faulty statement and the denominator is the total number of executable statements in the program.

\subsection{MBFL-FLIM Configuration}

\subsubsection{Model Selection and Architecture}

For the FLIM recognition component of MBFL-FLIM, we carefully select a diverse set of state-of-the-art LLMs representing different architectural approaches and parameter scales. Our model selection encompasses four major series:

\textbf{Qwen and Qwen-Coder Series}: We employ both general-purpose models (Qwen-7B, Qwen-14B) and code-specialized models (Qwen-Coder-7B, Qwen-Coder-14B) from Alibaba Cloud's Qwen series. The Qwen series demonstrates strong multilingual and multimodal capabilities, with the code-specialized models specifically optimized for programming-related tasks across 92+ programming languages.

\textbf{DeepSeek Series}: We utilize three models from the DeepSeek-R1 series, including DeepSeek-R1-7B and DeepSeek-R1-14B (distilled from Qwen models), and DeepSeek-R1-Llama-8B (distilled from Llama). The DeepSeek-R1 series focuses on advanced reasoning capabilities and demonstrates superior performance on complex logical reasoning benchmarks.

\textbf{Llama Series}: We include Llama-8B from Meta's open-source series, which provides strong baseline language processing capabilities and serves as a representative of the widely-adopted Llama architecture.

Table~\ref{table:llm_models} provides detailed specifications for each selected model, including parameter counts, architectural characteristics, and optimization focus areas.

\begin{table*}[!ht]
\centering
\caption{LLMs for MBFL-FLIM}
\begin{tblr}{
  colspec = {l l l c X[l]},
  row{1} = {font=\bfseries},
  vspan=even,
  width = \textwidth,
}
\toprule
Affiliation & Model Series & Model ID & Size & Model Description \\
\midrule
\SetCell[r=4]{} Qwen & \SetCell[r=2]{} Qwen & Qwen-7B & 7B & \SetCell[r=2]{} {Qwen is Alibaba Cloud's LLM series with strong multilingual and multimodal capabilities. The series spans 0.6B to 235B parameters, featuring enhanced reasoning and long-context handling in versions like Qwen2.5. This study utilizes Qwen-7B and Qwen-14B for their balanced performance.} \\ \cmidrule{3-4}
            &            & Qwen-14B & 14B & \\
\cmidrule[lr]{2-5}
\SetCell[r=2]{}       & \SetCell[r=2]{} Qwen-Coder & Qwen-Coder-7B & 7B & \SetCell[r=2]{} {Qwen-Coder is a specialized code-focused series for generation, repair, and reasoning across 92+ programming languages. With competitive performance against GPT-4o on coding benchmarks, the series ranges from 0.5B to 32B parameters. This study employs Qwen-Coder-7B and Qwen-Coder-14B models.} \\ \cmidrule{3-4}
            &            & Qwen-Coder-14B & 14B & \\
\midrule
\SetCell[r=3]{} DeepSeek & \SetCell[r=3]{} DeepSeek-R1 & DeepSeek-R1-7B & 7B & \SetCell[r=3]{} {DeepSeek-R1 is an advanced reasoning-focused model series that excels in complex logical reasoning tasks. The series demonstrates superior performance on reasoning benchmarks through innovative training methodologies. This study utilizes distilled models of DeepSeek-R1-7B (distilled from Qwen), DeepSeek-8B(distilled from Llama), and DeepSeek-R1-14B(distilled from Qwen).} \\ \cmidrule{3-4}
            &            & DeepSeek-R1-Llama-8B & 8B & \\ \cmidrule{3-4}
            &            & DeepSeek-R1-14B & 14B & \\
\midrule
Meta        & Llama      & Llama-8B & 8B & Llama is Meta's open-source LLM series that has evolved into multimodal AI frameworks with safety features and multilingual support. The series ranges from 1B to 2T parameters, with Llama 3+ versions offering enhanced tokenization and grouped query attention. This study employs Llama-8B for balanced language capabilities. \\
\bottomrule
\end{tblr}
\vspace{0.5em}

\label{table:llm_models}
\footnotesize{\textit{Note: Model IDs represent the specific LLM used in the experiments.}}
\end{table*}

\subsubsection{Tuning Experiment for LLM Selection}

To determine the optimal LLM for MBFL-FLIM, we conduct a tuning experiment across eight representative models from different architectural series and parameter scales. This experiment evaluates the performance of MBFL-FLIM with different LLMs to select the most suitable model for subsequent experiments. Due to the computational overhead of fine-tuning, we only evaluate the performance of different LLMs on MBFL-FLIM I (the basic prompt version) and assume that the optimal model identified through this experiment will remain optimal for fine-tuning and confidence estimation variants. 

\begin{table}[htb]
\centering
\renewcommand{\arraystretch}{1.1}
\caption{Performance Comparison of MBFL-FLIM I with Different LLMs}
\label{tuning-table}
\begin{tabular}{@{}lcccc@{}}
\toprule
\textbf{Model} & \textbf{Top-1} & \textbf{Top-3} & \textbf{Top-5} & \textbf{MFR} \\
\midrule
Qwen-7B & 74 & 98 & 123 & 28.45 \\
Qwen-14B & 77 & 104 & 130 & 26.73 \\
Qwen-Coder-7B & 70 & 93 & 112 & 31.28 \\
Qwen-Coder-14B & 74 & 97 & 114 & 29.84 \\
DeepSeek-R1-7B & 74 & 101 & 126 & 27.91 \\
DeepSeek-R1-14B & 81 & 115 & 142 & 24.89 \\
DeepSeek-R1-Llama-8B & 60 & 72 & 103 & 38.52 \\
Llama-8B & 63 & 73 & 99 & 36.47 \\
\bottomrule
\end{tabular}
\end{table}

Table~\ref{tuning-table} shows the performance comparison results. DeepSeek-R1-14B achieves the best performance with 81, 115, and 142 for Top-1, Top-3, and Top-5 respectively, and the lowest MFR of 24.89. The results indicate that: (1) larger parameter models generally perform better within the same architectural family; (2) DeepSeek-R1-14B outperforms other models across all metrics. Based on the experimental results, DeepSeek-R1-14B is selected as the LLM for MBFL-FLIM in subsequent experiments.

\subsubsection{Fine-tuning Configuration}

To enhance FLIM recognition accuracy, we implement domain-specific fine-tuning using a leave-one-out cross-validation approach within individual projects. This intra-project training strategy ensures that models are adapted to project-specific characteristics while maintaining generalizability across different fault types.

\textbf{Dataset Construction}: The fine-tuning dataset is constructed from the Defects4J benchmark, where mutants killed by at least one failing test but not on the faulty statements are labeled as FLIMs. Each training instance includes the original code, mutant code differences, error messages, stack traces, and mutation location information, providing comprehensive context for FLIM recognition.

\textbf{Training Strategy}: We employ LoRA (Low-Rank Adaptation) for efficient fine-tuning to reduce computational overhead while maintaining model effectiveness. The training process incorporates early stopping mechanisms to prevent overfitting and employs appropriate regularization techniques to enhance generalization across different projects and fault scenarios.

\textbf{Validation Strategy}: We implement a 90\%-10\% train-validation split within each project, using accuracy as the primary evaluation metric. The leave-one-out approach ensures that each project serves as both training and testing data, providing robust evaluation of the fine-tuning effectiveness.

\subsubsection{Confidence Estimation Algorithms}

To address the inherent instability of LLM outputs in FLIM recognition, we implement three independent confidence estimation algorithms that operate on multi-decision matrices constructed from multiple inference runs:

\textbf{Entropy-Based Weighting (EBW)}: This algorithm leverages information entropy to measure uncertainty across different inference runs. For each mutant, it computes the probability distribution over binary decisions and calculates entropy to derive confidence scores. Higher entropy indicates lower confidence, with the final confidence score computed as $\phi_{\text{EBW}}(m_i) = 1 - \frac{H_i}{\log 2}$, where $H_i$ represents the information entropy of the decision distribution.

\textbf{Count-Based Weighting (CBW)}: This method provides an intuitive confidence measure by directly using the proportion of positive FLIM recognition decisions across multiple inference runs. The confidence score is calculated as $\phi_{\text{CBW}}(m_i) = \frac{1}{K}\sum_{j=1}^{K}R_{i,j}$, where $K$ represents the number of inference runs and $R_{i,j}$ denotes the binary decision result.

\textbf{Principal Component Analysis (PCA)}: This technique applies dimensionality reduction to the multi-decision matrix while preserving the most informative variance. It computes the first principal component of the decision matrix and projects each mutant's decision vector onto this component, followed by min-max normalization to obtain confidence scores in the $[0,1]$ interval.

\subsection{Baseline Methods}
\label{subsec:Baseline Methods}

To evaluate MBFL-FLIM's effectiveness, we compare it against representative methods from three categories of fault localization approaches:

\textbf{Traditional Methods}: We include SBFL~\cite{abreu2006evaluation} and MBFL~\cite{papadakis2015metallaxis} as fundamental baselines that represent the core techniques in fault localization research.

\textbf{Dynamic feature-based Methods}: We evaluate against recent MBFL improvements including Delta4Ms~\cite{liu2024delta4ms, du2022improving}, which addresses mutant bias through signal theory; BLMu~\cite{wang2025systematic}, which assigns weights based on kill types; and SmartFL~\cite{wu2025smartfl}, which incorporates dynamic features for enhanced effectiveness.

\textbf{LLM-based Methods}: We compare with LLMAO~\cite{yang2024large}, a test-free fault localization approach leveraging LLMs for code analysis, and LLMFL~\cite{wu2023large}, an empirical study of ChatGPT's fault localization capabilities.

Detailed technical descriptions and comparative analysis of these methods are provided in Section~\ref{sec: Related Work}.

\subsection{Experimental Environment}

All experiments are conducted on a standardized computing environment featuring Ubuntu 22.04 LTS operating system and NVIDIA A100 Tensor Core GPU with 80GB memory capacity. We utilize Python 3.9 for implementation, with GZoltar~\cite{campos2012gzoltar} for spectrum collection and Major~\cite{just2014defects4j} for mutation analysis. The Ochiai~\cite{abreu2006evaluation} suspiciousness formula is employed consistently across all experiments to ensure fair comparison.

The experimental setup ensures reproducibility through fixed random seeds and consistent environmental configurations. Each experiment is repeated multiple times to account for stochastic variations in LLM outputs, with statistical aggregation applied to obtain robust performance estimates.

\section{Results Analysis}
\label{sec: Results Analysis}

\subsection{RQ1: How does the fault localization effectiveness of MBFL-FLIM compare to that of baseline methods?}

RQ1 evaluates the overall effectiveness of MBFL-FLIM by comparing its fault localization performance against seven representative baseline methods across the entire Defects4J dataset. The baseline methods encompass four main technical approaches: (1) traditional SBFL that analyzes test execution coverage patterns; (2) MBFL approaches including classical MBFL and enhanced techniques (Delta4Ms, BLMu) that leverage mutation analysis with improved suspiciousness calculation or mutant selection strategies; (3) statistical analysis-based fault localization methods (SmartFL) that employ statistical techniques for fault recognition; and (4) LLM-based fault localization methods (LLMAO, LLMFL) that utilize semantic analysis capabilities for fault recognition. This comparison employs three core evaluation metrics: Top-N (measuring the number of faults localized within the top N suspicious statements), MFR (Mean First Rank, indicating the average rank of the first fault-revealing statement), and EXAM (measuring the percentage of code examined before locating the fault).

\begin{table}[htbp]
\centering
\renewcommand{\arraystretch}{1.1}
\caption{The TOP-N and MAR Comparison between MBFL-FLIM and Baselines}
 \label{RQ1-table}
\begin{tabular}{@{}lcccccc@{}}
\toprule
\textbf{Method} & \textbf{Top-1} & \textbf{Top-3} & \textbf{Top-5} & \textbf{MFR}  \\ 
\midrule

SBFL           & 37             & 88            & 127            & 51.85                     \\
MBFL           & 35             & 80            & 126            & 54.67                      \\
Delta4Ms        & 70             & 108           & 134            & 27.01                      \\
BLMu           & 55             & 99            & 123            & 35.24                    \\
SmartFL           & 50             & 93            & 115      & 42.73                    \\
LLMAO         & 82             & 116            & 145            & -                   \\
LLMFL        & 75             & 102            & 138             &   -                 \\
MBFL-FLIM& 96            & 133            & 171            &   19.74                  \\
\bottomrule
\end{tabular}
\end{table}

Table~\ref{RQ1-table} demonstrates that MBFL-FLIM achieves substantial improvements across all evaluation metrics. Specifically, MBFL-FLIM successfully localizes 96 faults within the top-1 suspicious statement, significantly outperforming all baseline methods with an average absolute improvement of 44 faults. For the Top-3 metric, MBFL-FLIM localizes 133 faults, representing an average absolute improvement of 30 faults over all baseline methods. The Top-5 results further validate MBFL-FLIM's effectiveness, with 171 successfully localized faults and an average absolute improvement of 38 faults. Additionally, MBFL-FLIM achieves the lowest MFR of 19.74, indicating superior ranking effectiveness.

\begin{figure}[htbp]
    \centering
    \centering\includegraphics[scale = 0.34]{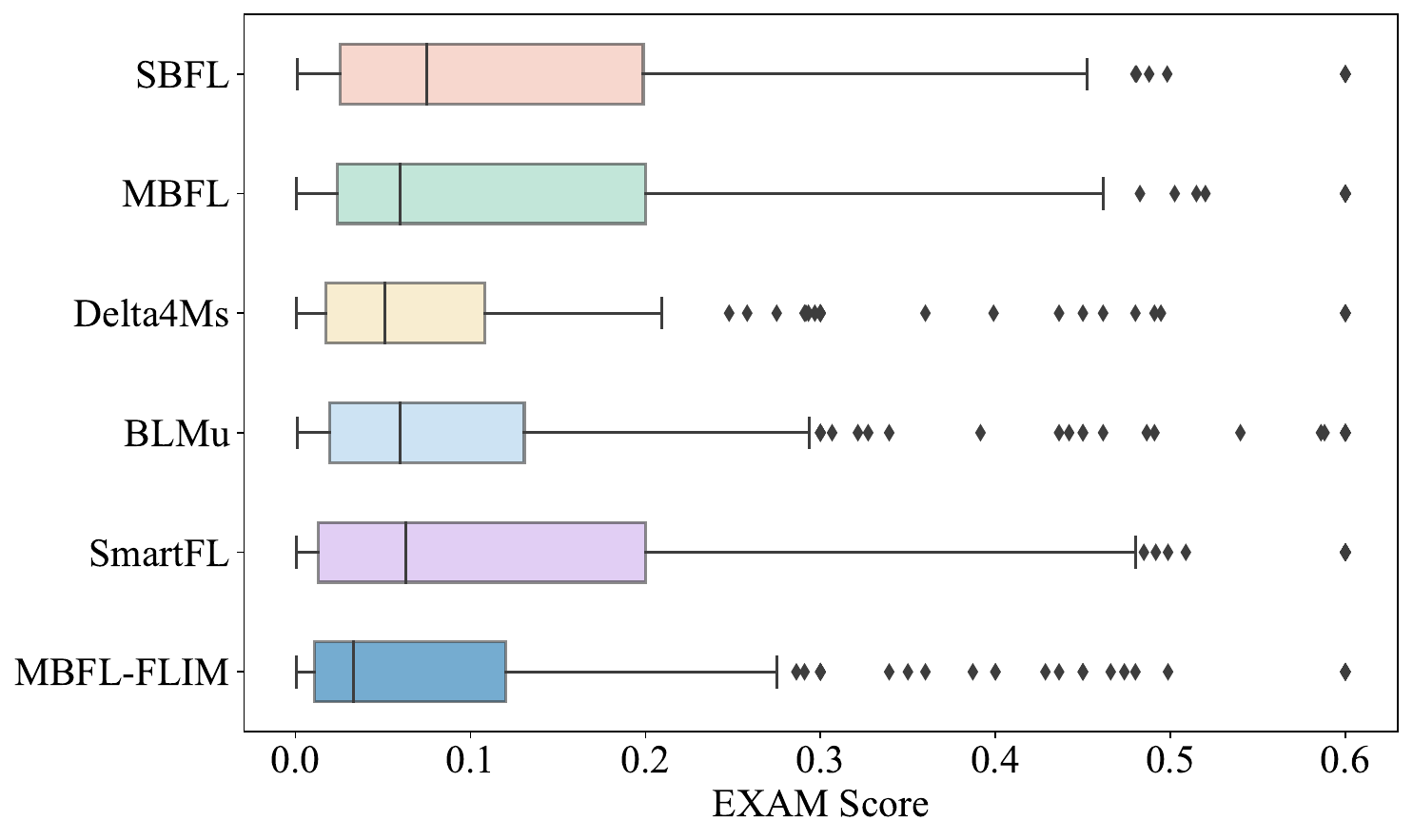}
    \caption{The EXAM Distribution Comparison between MBFL-FLIM and Baselines}
    \label{RQ1-plot}
\end{figure}

Figure~\ref{RQ1-plot} presents the EXAM score distribution, revealing that MBFL-FLIM consistently outperforms all baseline methods with a notably lower median EXAM score and an overall distribution closer to the y-axis, indicating that the method requires examining less code before locating faults. The boxplot analysis demonstrates that MBFL-FLIM requires examining substantially less code before locating faults compared to all baseline approaches, with particularly notable improvements over traditional methods (SBFL, MBFL) and competitive advantages over recent LLM-based approaches (LLMAO, LLMFL). The improved efficiency can be attributed to MBFL-FLIM's dual-component architecture: the FLIMs Recognition component effectively identifies interference mutants that mislead traditional fault localization techniques, while the Suspiciousness Refinement component appropriately scales down the suspiciousness scores of identified FLIMs, thereby enhancing the prominence of real fault-revealing statements and reducing the amount of code developers need to examine.

The experimental results demonstrate that MBFL-FLIM, as an advanced MBFL approach, achieves notable improvements over existing baseline methods in fault localization scenarios. The significant improvements over classical MBFL validate the effectiveness of addressing FLIMs through LLM-based semantic analysis. Compared to recent LLM-based fault localization methods, MBFL-FLIM's performance demonstrates the advantages of integrating dynamic execution information with semantic understanding, combining the precision of mutation-based analysis with the reasoning capabilities of LLMs, which proves particularly beneficial for large-scale and complex software projects where concrete runtime behavior analysis becomes crucial for accurate fault recognition.

\begin{center}
\begin{tcolorbox}[colback=gray!10,
                  colframe=black,
                  arc=1mm, auto outer arc,
                  boxrule=0.5pt
                 ]
\textbf{Summary for RQ1:} 

MBFL-FLIM demonstrates better fault localization performance over baseline methods across all evaluation metrics, achieving 96 Top-1 localizations (average absolute improvement of 44 faults over baseline methods), the lowest MFR of 19.74, and better EXAM performance with reduced code examination requirements, validating the effectiveness of FLIMs recognition and suspiciousness refinement in improving fault localization effectiveness.

\end{tcolorbox}
\end{center}

\subsection{RQ2: How does MBFL-FLIM perform in multi-fault scenarios compared to baseline methods?}

RQ2 investigates the robustness and effectiveness of MBFL-FLIM in multi-fault scenarios, which present significantly greater complexity due to the presence of multiple concurrent faults that can amplify interference effects and challenge traditional fault localization approaches. To conduct this evaluation, we filtered the Defects4J dataset to include only programs containing multiple faults, creating a more challenging experimental setting. The same baseline methods (SBFL, MBFL, Delta4Ms, BLMu, SmartFL, LLMAO, LLMFL) and evaluation metrics (Top-N, MFR, EXAM) are employed to ensure consistent comparison with RQ1 results.

\begin{table}[htbp]
\centering
\renewcommand{\arraystretch}{1.1}
\caption{The TOP-N and MAR Comparison between MBFL-FLIM and Baselines (RQ2)}
 \label{RQ2-table}
\begin{tabular}{@{}lcccccc@{}}
\toprule
\textbf{Method} & \textbf{Top-1} & \textbf{Top-3} & \textbf{Top-5} & \textbf{MFR}  \\ 
\midrule

SBFL           & 13             & 59            & 94            & 91.85                     \\
MBFL           & 19             & 66            & 91            & 80.29                      \\
Delta4Ms        & 51             & 82            & 99            & 47.27                      \\
BLMu           & 37             & 73            & 90            & 58.83                    \\
SmartFL           & 28             & 60            & 78      & 82.73                    \\
LLMAO         & 67             & 91            & 105            & -                   \\
LLMFL        & 60             & 88            & 103            & -                 \\
MBFL-FLIM& 75            & 100            & 114            & 28.58                  \\
\bottomrule
\end{tabular}
\end{table}

Table~\ref{RQ2-table} reveals that MBFL-FLIM maintains its superior performance in the more challenging multi-fault scenarios. For Top-1 localization, MBFL-FLIM successfully identifies 75 faults, representing an average absolute improvement of 36 faults over all baseline methods. The performance gap becomes more pronounced in Top-3 and Top-5 metrics, where MBFL-FLIM achieves 100 and 114 localizations respectively, with average absolute improvements of 26 and 20 faults. Notably, MBFL-FLIM achieves the lowest MFR of 28.58, indicating superior ranking effectiveness even in complex multi-fault environments.

\begin{figure}[htbp]
    \centering
    \centering\includegraphics[scale = 0.32]{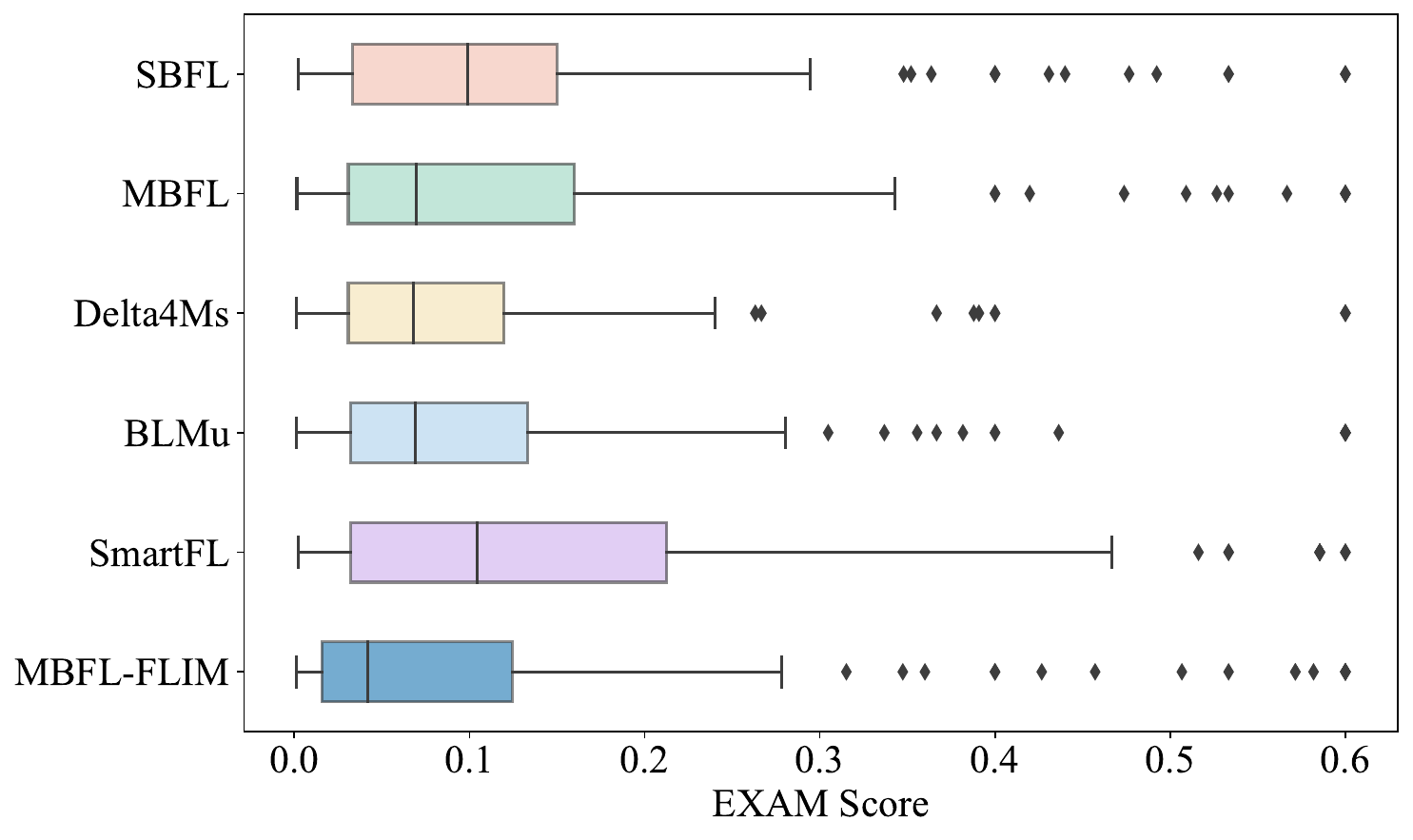}
    \caption{The EXAM Distribution Comparison between MBFL-FLIM and Baselines on Multiple Fault Scenario}
    \label{RQ2-plot}
\end{figure}

Figure~\ref{RQ2-plot} demonstrates that MBFL-FLIM's EXAM performance remains consistently superior in multi-fault scenarios, with a lower distribution compared to baseline methods. This stability is particularly significant given that multi-fault programs typically exhibit more complex fault interactions and increased interference behaviors that can mislead traditional localization techniques. The robust performance indicates that MBFL-FLIM's FLIMs recognition approach effectively handles the amplified interference effects present in multi-fault scenarios, while the confidence estimation component provides additional stability against the increased uncertainty inherent in such complex environments.

The experimental results establish that MBFL-FLIM maintains its effectiveness advantage in multi-fault scenarios, demonstrating the method's robustness. The consistent performance improvements across all metrics validate that the FLIMs-based approach successfully addresses the fundamental challenges posed by interference mutants, even under the complexity of multi-fault environments.

\begin{center}
\begin{tcolorbox}[colback=gray!10,
                  colframe=black,
                  arc=1mm, auto outer arc,
                  boxrule=0.5pt
                 ]
\textbf{Summary for RQ2:} 

MBFL-FLIM demonstrates robust performance in multi-fault scenarios, achieving 75 Top-1 localizations (average absolute improvement of 36 faults over baseline methods), the lowest MFR of 28.58 and lower EXAM distribution, validating the method's effectiveness and stability in handling complex multi-fault environments with amplified interference effects.

\end{tcolorbox}
\end{center}

\subsection{RQ3: Whether the fine-tuning and confidence estimation components can further improve the performance of FLIM Recognition in MBFL-FLIM?}

RQ3 conducts a systematic ablation study to decompose the performance gains achieved by MBFL-FLIM into the individual contributions of its constituent components. By systematically removing the fine-tuning component and the confidence estimation component, we quantify the specific enhancement provided by each component to the overall fault localization effectiveness. This analysis evaluates four variants: MBFL-FLIM I (basic LLM-based recognition only), MBFL-FLIM II$_{F}$ (adding fine-tuning), MBFL-FLIM II$_{C}$ (adding confidence estimation), and MBFL-FLIM (incorporating both enhancements), using Top-N, MFR, and EXAM metrics.

\begin{table}[htb]
\centering
\caption{Component Ablation Comparison of MBFL-FLIM on Top-N and MAR Metrics}
\label{RQ3-table}
\begin{tabular}{@{}lcccc@{}}
\toprule
\textbf{Method} & \textbf{Top-1} & \textbf{Top-3} & \textbf{Top-5} & \textbf{MFR} \\ 
\midrule
MBFL-FLIM I    & 64             & 104            & 124   & 31.32     \\
MBFL-FLIM II$_{C}$   & 81             & 115            & 142   & 24.89     \\
MBFL-FLIM II$_{F}$ & 80             & 112            & 141   & 25.14     \\
MBFL-FLIM       & 96             & 133            & 171   & 19.74     \\
\bottomrule
\end{tabular}
\end{table}

Table~\ref{RQ3-table} reveals the distinct contributions of each component to MBFL-FLIM's overall performance. The basic variant MBFL-FLIM I achieves 64, 104, and 124 for Top-1, Top-3, and Top-5 respectively, establishing the foundation performance using only basic LLM-based FLIM recognition. The addition of confidence estimation (MBFL-FLIM II$_{C}$) demonstrates substantial improvements, achieving 81 (an absolute increase of 17 faults), 115 (an absolute increase of 11 faults), and 142 (an absolute increase of 18 faults) respectively, while reducing MFR from 31.32 to 24.89. Similarly, incorporating fine-tuning (MBFL-FLIM II$_{F}$) yields significant enhancements with 80 (an absolute increase of 16 faults), 112 (an absolute increase of 8 faults), and 141 (an absolute increase of 17 faults) respectively. The MBFL-FLIM, integrating both components, achieves optimal performance with 96, 133, and 171, representing the highest absolute improvements of 32 faults for Top-1, 29 faults for Top-3, and 47 faults for Top-5 over the baseline MBFL-FLIM I approach.

\begin{figure}[htp]
    \centering
    \centering\includegraphics[scale = 0.30]{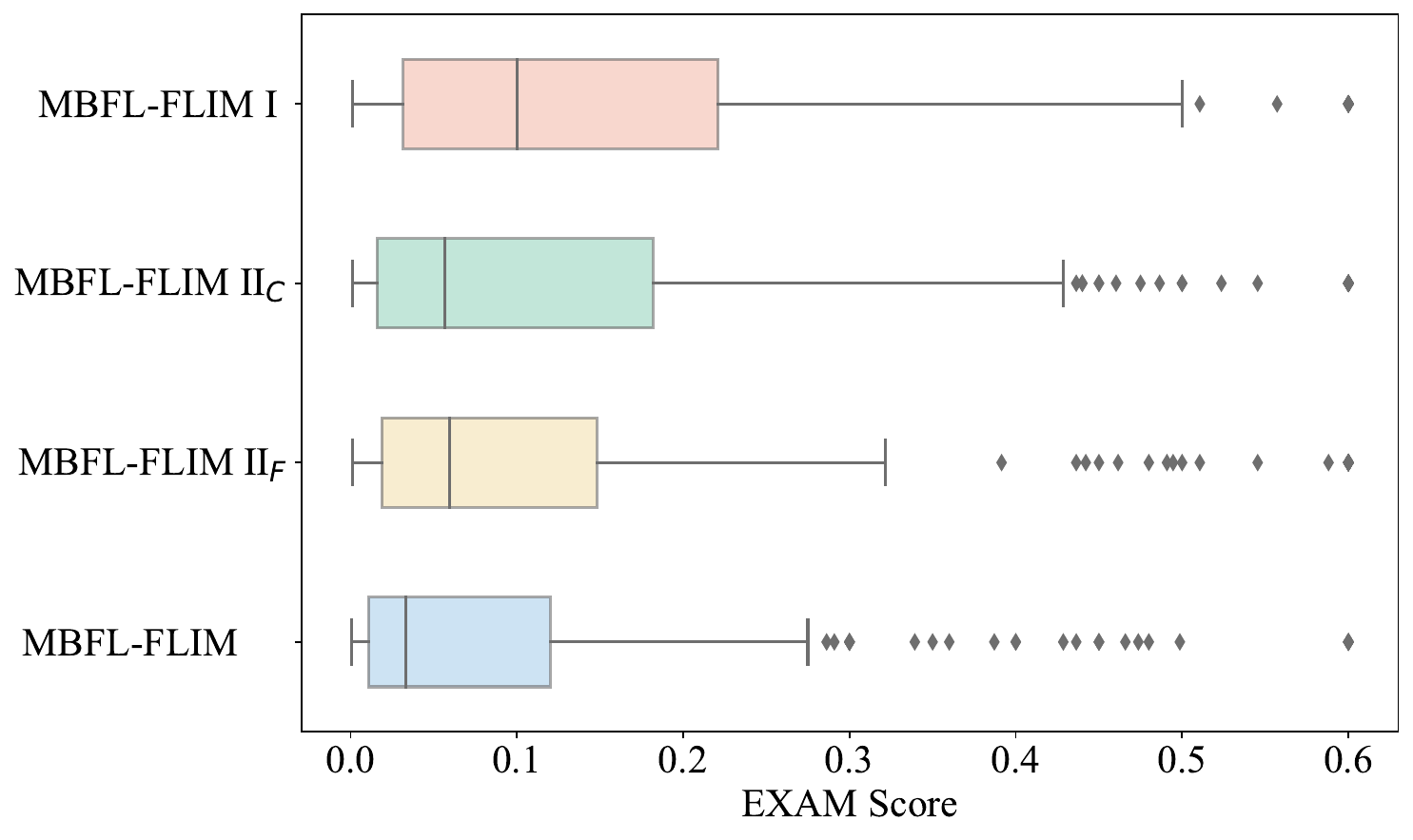}
    \caption{Component Ablation Comparison of MBFL-FLIM on EXAM Metric}
    \label{RQ3-plot}
\end{figure}

Figure~\ref{RQ3-plot} demonstrates the progressive improvement in EXAM performance as components are systematically added to the baseline approach. MBFL-FLIM I exhibits the highest EXAM distribution, indicating less efficient fault localization. The addition of either confidence estimation (MBFL-FLIM II$_{C}$) or fine-tuning (MBFL-FLIM II$_{F}$) both slightly reduces the EXAM distribution, with confidence estimation showing slightly superior performance. The MBFL-FLIM achieves the lowest median EXAM scores, demonstrating the synergistic effect of combining both components.

The ablation analysis reveals that both components contribute complementary enhancements to MBFL-FLIM's effectiveness. The fine-tuning component optimizes the LLM's parameters specifically for FLIM detection tasks, improving the model's ability to recognize subtle characteristics that distinguish interference mutants from regular mutants. The confidence estimation component addresses the inherent uncertainty in LLM predictions by aggregating multiple inference results and providing reliability scores, thereby reducing false positive recognitions and enhancing overall stability. The superior performance of MBFL-FLIM demonstrates that these components work synergistically, with fine-tuning improving the quality of individual predictions while confidence estimation ensures robust aggregation of multiple predictions.

\begin{center}
\begin{tcolorbox}[colback=gray!10,
                  colframe=black,
                  arc=1mm, auto outer arc,
                  boxrule=0.5pt
                 ]
\textbf{Summary for RQ3:} 
The ablation study demonstrates that both fine-tuning and confidence estimation components provide substantial individual contributions, with the complete MBFL-FLIM achieving optimal performance (96/133/171 for Top-1/3/5) through their synergistic combination, representing absolute improvements of 32 faults for Top-1, 29 faults for Top-3, and 47 faults for Top-5 over the baseline MBFL-FLIM I approach.

\end{tcolorbox}
\end{center}

\subsection{RQ4: How do different confidence estimation algorithms impact the performance of MBFL-FLIM?}

RQ4 investigates the effectiveness of different confidence estimation algorithms within the confidence estimation component to determine the optimal strategy. This analysis compares three distinct algorithms: Entropy-based Weighting (EBW), Count-based Weighting (CBW), and Principal Component Analysis (PCA), evaluating their respective contributions to fault localization accuracy and stability. The comparison employs Top-N, MFR, and EXAM metrics to assess how different confidence estimation algorithms influence the overall performance of MBFL-FLIM.

\begin{table}[htb]
\centering
\caption{Performance Comparison of MBFL-FLIM II$_{C}$ with Different Confidence Algorithms on Top-N and MAR Metrics}
\label{RQ4-table}
\begin{tabular}{@{}lcccc@{}}
\toprule
\textbf{Method} & \textbf{Top-1} & \textbf{Top-3} & \textbf{Top-5} & \textbf{MFR} \\ 
\midrule
MBFL-FLIM II$_{C}$ (EBW) & 77 & 114 & 132 & 25.73 \\
MBFL-FLIM II$_{C}$ (CBW) & 80 & 113 & 138 & 26.65 \\
MBFL-FLIM II$_{C}$ (PCA) & 81 & 115 & 142 & 24.89 \\
\bottomrule
\end{tabular}
\end{table}

Table~\ref{RQ4-table} demonstrates the comparative performance of different confidence estimation algorithms in aggregating multi-model predictions. The Entropy-based Weighting (EBW) approach achieves 77, 114, and 132 for Top-1, Top-3, and Top-5 respectively, with an MFR of 25.73, demonstrating solid performance by leveraging prediction uncertainty to weight model contributions. The Count-based Weighting (CBW) algorithm yields comparable results with 80, 113, and 138, achieving an MFR of 26.65, indicating that simple majority voting provides reasonable aggregation effectiveness. The Principal Component Analysis (PCA) approach demonstrates superior performance, achieving 81 (an absolute increase of 4 faults over EBW and 1 fault over CBW), 115 (an absolute increase of 1 fault over EBW and 2 faults over CBW), and 142 (an absolute increase of 10 faults over EBW and 4 faults over CBW) respectively, while attaining the lowest MFR of 24.89. Across all confidence estimation algorithms, the average performance is 79 for Top-1, 114 for Top-3, and 137 for Top-5, with PCA consistently outperforming the average by 2, 1, and 5 faults respectively.

\begin{figure}[htbp]
    \centering
    \centering\includegraphics[scale = 0.25]{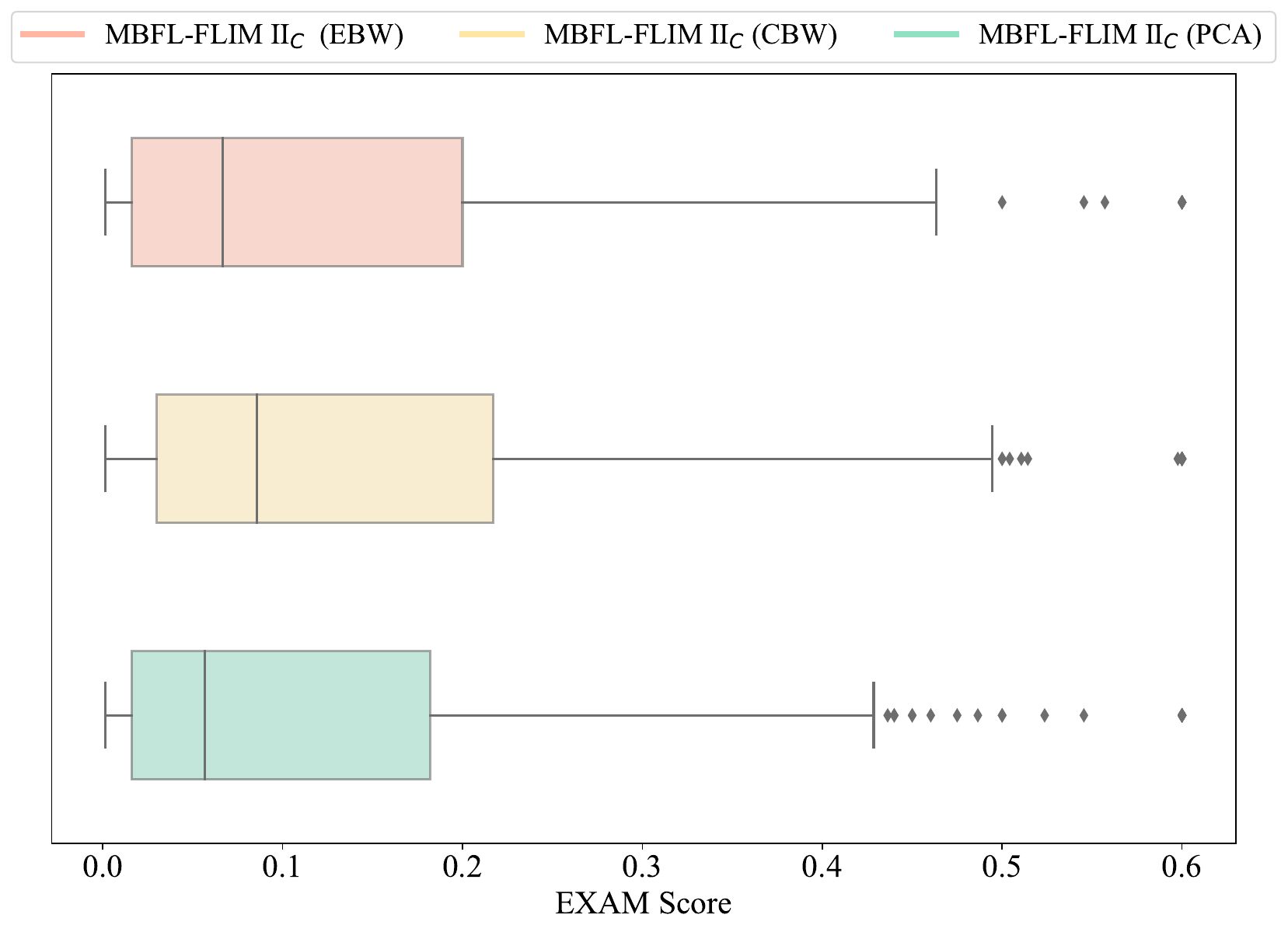}
    \caption{Performance Comparison of MBFL-FLIM II$_{C}$ with Different Confidence Algorithms on EXAM Metric}
    \label{RQ4-plot}
\end{figure}

Figure~\ref{RQ4-plot} illustrates the EXAM score distributions for different confidence estimation algorithms, revealing distinct patterns in fault localization efficiency. The EBW approach exhibits a moderate median EXAM score, demonstrating reasonable performance in aggregating model predictions based on uncertainty measures. CBW shows slightly higher median performance, indicating that simple majority voting provides effective aggregation capabilities. The PCA algorithm demonstrates superior performance with the lowest median EXAM scores, suggesting more efficient and stable fault localization across diverse program contexts.

The superior performance of PCA can be attributed to its sophisticated dimensionality reduction capabilities that capture the most informative features from multiple model predictions while filtering out noise and redundant information. Unlike EBW, which relies solely on prediction uncertainty, or CBW that treats all models equally, PCA identifies the principal components that contribute most significantly to accurate FLIM recognition. This approach effectively leverages the complementary strengths of different LLM models while mitigating their individual weaknesses, resulting in more robust and accurate fault localization performance.

\begin{center}
\begin{tcolorbox}[colback=gray!10,
                  colframe=black,
                  arc=1mm, auto outer arc,
                  boxrule=0.5pt
                 ]
\textbf{Summary for RQ4:} 
The PCA confidence estimation algorithm demonstrates superior performance across all metrics, achieving the best results with MFR of 24.89. PCA effectively captures key information from multiple FLIMs recognitions while reducing noise and enhancing aggregation stability, outperforming both EBW and CBW approaches.
\end{tcolorbox}
\end{center}

\section{Threats To Validity}
\label{sec:Threats to Validity}

\subsection{Internal Validity}

The correctness of implementation in several core components of our approach may pose internal threats to validity. Specifically, in the mutation analysis and suspiciousness calculation of the MBFL component, we rely on the official Defects4J interfaces and scripts to perform mutation analysis, ensuring standard and reproducible procedures. For suspiciousness computation, we implemented the formulas based on well-established fault localization studies and cross-validated our implementation with publicly available reference code to minimize potential errors~\cite{abreu2006evaluation,wong2013dstar}.
Additionally, the fine-tuning process of LLMs may introduce implementation-related risks. To mitigate this, we employed established fine-tuning strategies, referenced existing literature~\cite{Ding2023ParameterEfficient,10.1145/3735633}, and calibrated hyperparameters through preliminary experiments to ensure sound methodology.
Finally, the computation of confidence scores using techniques such as Principal Component Analysis (PCA)~\cite{jolliffe2016principal} and information entropy could be a source of inaccuracy if misapplied. We addressed this by leveraging established third-party libraries and validating the output, thus improving the reliability of the results.

\subsection{External Validity}

External validity concerns the extent to which our research findings can be generalized to other software systems and application scenarios. One important threat lies in the diversity of codebases, functionalities, and programming languages. Although Defects4J provides programs of varying sizes and domains, it is limited to the Java language. To enhance the generalizability of our results, we plan to extend our experiments in the future to additional datasets in C and Python, thereby evaluating the robustness of our approach across different ecosystems. 
Moreover, the generalization capability of the MBFL-FLIM method across different foundation models is another potential concern. To address this challenge, we introduced a variety of LLMs with distinct characteristics, ranging from models specialized in code processing to those with strong reasoning abilities. This ensures that our findings are not overly dependent on any specific type of model.

\subsection{Construct Validity}

Construct validity refers to how well our chosen metrics and definitions capture the intended concepts. To evaluate fault localization performance, we employed Top-N, Mean Fault Rank (MFR), and EXAM score, which are widely used in the literature and recognized for their reliability and interpretability~\cite{wong2016survey,li2017transforming}. 
Furthermore, the construct of FLIMs could be questioned in terms of how accurately it captures the concept of interference. We addressed this by grounding the definition in the RIPR model, which provides a theoretically sound framework for analyzing the impact of mutants on fault localization. This formalization also guided our prompt engineering process, making it more systematic and theoretically justified.

\section{Related Work}
\label{sec: Related Work}

\subsection{Recent Advances in MBFL Effectiveness Improvement}

Recent developments in MBFL effectiveness improvement have primarily focused on two key aspects: enhancing mutation analysis techniques and improving suspiciousness calculation methods. From the mutation analysis perspective, researchers have explored various approaches to optimize mutant generation. Neural-MBFL ~\cite{du2024neural} leverages deep learning to generate semantically meaningful mutants, achieving substantial improvements in fault localization accuracy. Neuralntegra-MBFL~\cite{liu2025integrating} combines neural and traditional mutants to enhance the diversity and quality of mutation analysis. Regarding suspiciousness calculation, several innovative methods have been developed to refine the accuracy of fault localization. PRMA~\cite{yan2023effective} combines the PageRank algorithm with mutation analysis to improve the prioritization of suspicious code entities by leveraging mutant importance within the program's structural context. Wu et al. proposed GMBFL, a graph-based technique that integrates diverse fault-related information through a unified graph representation and utilizes a gated graph attention neural network to compute suspiciousness more precisely~\cite{wu2023gmbfl}. Liu et al. presented Delta4Ms, which addresses mutant bias by modeling suspiciousness as a combination of desired information and bias information, thereby refining suspiciousness scores through statistical modeling and higher-order mutants~\cite{liu2024delta4ms, du2022improving}. Wang et al. proposed 
BLMu, which assigns different weights to mutants based on their contributions to fault localization by considering different kill types~\cite{wang2025systematic}.

While these advances have significantly improved MBFL performance, they primarily focus on either optimizing mutant generation or enhancing suspiciousness calculation algorithms. In contrast, this study introduces a fundamentally different approach by addressing the interference phenomena in mutation analysis. Rather than improving mutant generation or suspiciousness calculation directly, our work focuses on recognizing and mitigating interference mutants that mislead the fault localization process.

\subsection{LLM Applications in Fault Localization}  

LLMs have demonstrated promising capabilities in fault localization as an application domain where their semantic reasoning capabilities complement traditional spectrum-based and mutation-based approaches. 
Wu et al. conducted LLMFL~\cite{wu2023large}, an empirical study evaluating ChatGPT's fault localization capabilities, finding that ChatGPT-4 outperforms existing methods with function-level code context but shows performance degradation when using class-level code context, highlighting context-sensitivity limitations in LLM-based approaches. Yang et al. introduced LLMAO~\cite{yang2024large}, a test-free fault localization approach that leverages LLMs to analyze code without requiring test execution, demonstrating improvements over traditional spectrum-based fault localization methods. Kang et al. developed AutoFL~\cite{kang2024quantitative}, a framework that employs a two-stage approach where LLMs generate natural language explanations for potential root causes and then aggregate these explanations to predict faulty methods, achieving improvements over spectrum-based baselines while providing interpretable explanations. Ji et al. explored the application of sequence generation fine-tuning techniques in code analysis tasks~\cite{ji2024impact}, contributing to the understanding of how LLMs can be adapted for software engineering applications and demonstrating measurable improvements in fault localization performance metrics. 
However, these LLM-based fault localization approaches face challenges that hard to conduct project-level fault localization and often requiring integration with dynamic execution information to achieve optimal effectiveness~\cite{kang2024quantitative}.

MBFL-FLIM differs from existing LLM-based fault localization approaches in both scope and methodology. While fault localization methods like LLMAO~\cite{yang2024large}, AutoFL~\cite{kang2024quantitative}, and Ji et al.~\cite{ji2024impact} focus on direct fault localization through broad code analysis, MBFL-FLIM addresses a specific problem within MBFL by targeting interference mutants that can mislead existing MBFL techniques. 
MBFL-FLIM enables precise LLM integration that addresses the FLIMs recognition challenge through three components: semantic behavioral analysis for recognizing FLIMs, fine-tuning enhancement for domain-specific FLIMs recognition, and confidence estimation for mitigating LLM output instability.

\section{Conclusion}
\label{sec: Conclusion}

In this paper, we introduce the concept of FLIMs defining the mutants that interfere with the MBFL and propose a novel LLM-based FLIM recognition and suspiciousness refinement-based FLIM mitigation to alleviate their negative impact in fault localization. 
Our theoretical analysis establishes a detailed understanding of FLIMs behavior through the RIPR model, revealing four distinct interference root causes: reach-based interference (control flow alteration), infection-based interference (error state masking), propagation-based interference (error transmission rerouting), and revealability-based interference (test oracle exploitation). These root causes systematically exploit program execution contexts to create misleading diagnostic information that undermines the effectiveness of MBFL. Building upon this, we propose to utilize the LLM-based semantic analysis (incorporating basic recognition prompt, FLIM-specific fine-tuning, and confidence estimation) to recognize the FLIMs and mitigate the effect of FLIMs by suspiciousness refinement. We further develop MBFL-FLIM, a fault localization framework that integrates the FLIM recognition and mitigation into the MBFL workflow. 
Empirical experimental evaluations on the Defects4J benchmark demonstrate substantial improvements over existing approaches, with MBFL-FLIM achieving 96 Top-1 localizations (an average absolute improvement of 44 faults over baseline methods, from an average baseline of 52 to 96) and maintaining robust performance in multi-fault scenarios with 75 Top-1 localizations. The ablation studies confirm that both fine-tuning and confidence estimation contribute synergistically, with Principal Component Analysis-based confidence estimation and DeepSeek-R1-14B achieving optimal performance for FLIM recognition.

Future research directions include investigating advanced confidence estimation algorithms that can better capture nuanced relationships between different LLM predictions and exploring ensemble methods that integrate multiple confidence estimation strategies. Additionally, we plan to extend our approach to other programming languages beyond Java and investigate the integration of more advanced LLM capabilities, such as function calling and task planning, which could enable fine-grained analysis of different FLIM interference mechanisms and support more sophisticated reasoning about code semantics and mutation behaviors. These enhancements could potentially lead to even more accurate fault localization performance and broader applicability across diverse software engineering contexts.

\section*{DATA AVAILABILITY STATEMENT}
To facilitate reproducibility and future research, we make the implementation of our FLIM identification and processing components publicly available at \url{https://github.com/HengyuanLiu/FLIMs}. The experimental datasets and detailed experimental results are available upon reasonable request.

\bibliographystyle{IEEEtran}
\bibliography{IEEEabrv, references}

\begin{IEEEbiography}[{\includegraphics[width=1in,height=1.25in,clip,keepaspectratio]{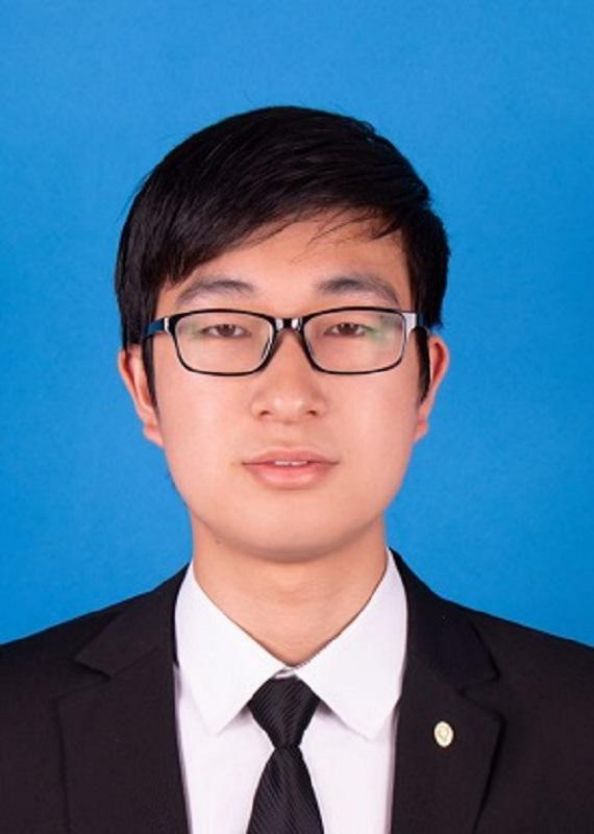}}]{Hengyuan Liu}
is currently working toward the Ph.D. degree with the College of Information Science and Technology, Beijing University of Chemical Technology (BUCT). His research interests include software testing, fault localization, and intelligent software engineering. His current research focuses on mutation-based fault localization and large language models (LLMs) applications in software engineering.
\end{IEEEbiography}

\begin{IEEEbiography}[{\includegraphics[width=1in,height=1.25in,clip,keepaspectratio]{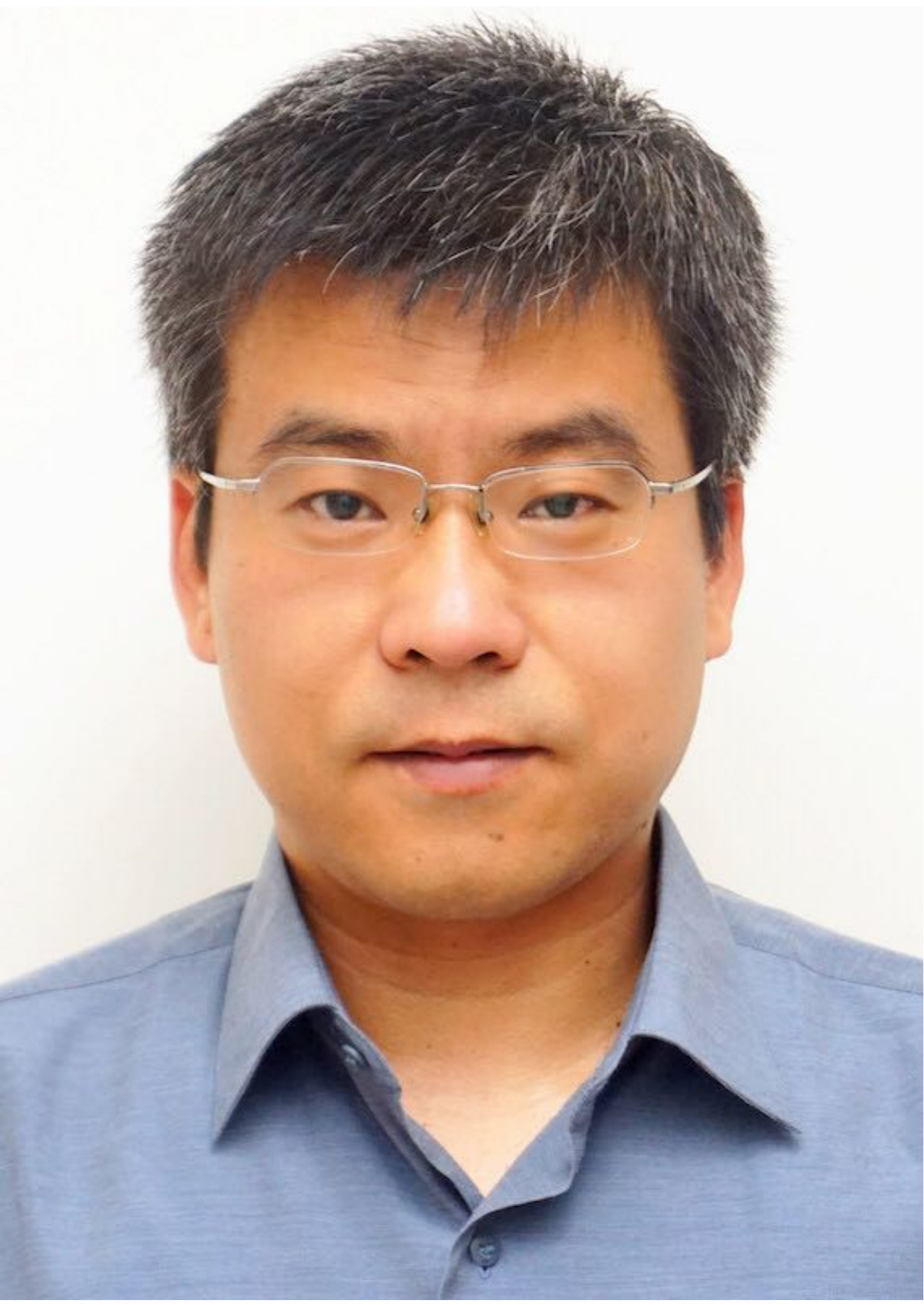}}]{Zheng Li}
received the Ph.D. degree from King's College London, CREST Centre, in 2009, under the supervision of Mark Harman. He is a Full Professor of computer science with the Department of Computer Science, Beijing University of Chemical Technology. He is the author of over 100 publications, a Co-Guest Editor for three journal special issues and has served on 20 programme committees. He has worked on program testing, source code analysis and manipulation, and intelligent software engineering.
\end{IEEEbiography}

\begin{IEEEbiography}[{\includegraphics[width=1in,height=1.25in,clip,keepaspectratio]{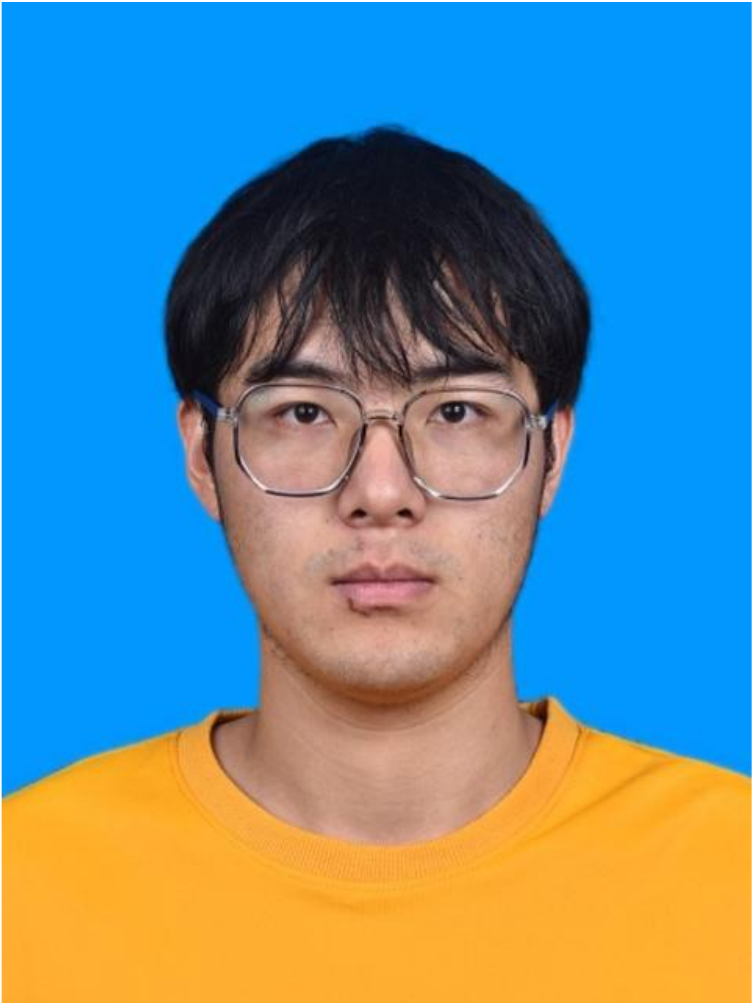}}]{Donghua Wang}
is currently working toward a master's degree at the College of Information Science and Technology, Beijing University of Chemical Technology (BUCT). His research interests lie in the fields of software engineering and software testing, with a particular focus on mutation-based fault localization and the application of large language models (LLMs) in code analysis.
\end{IEEEbiography}

\begin{IEEEbiography}[{\includegraphics[width=1in,height=1.25in,clip,keepaspectratio]{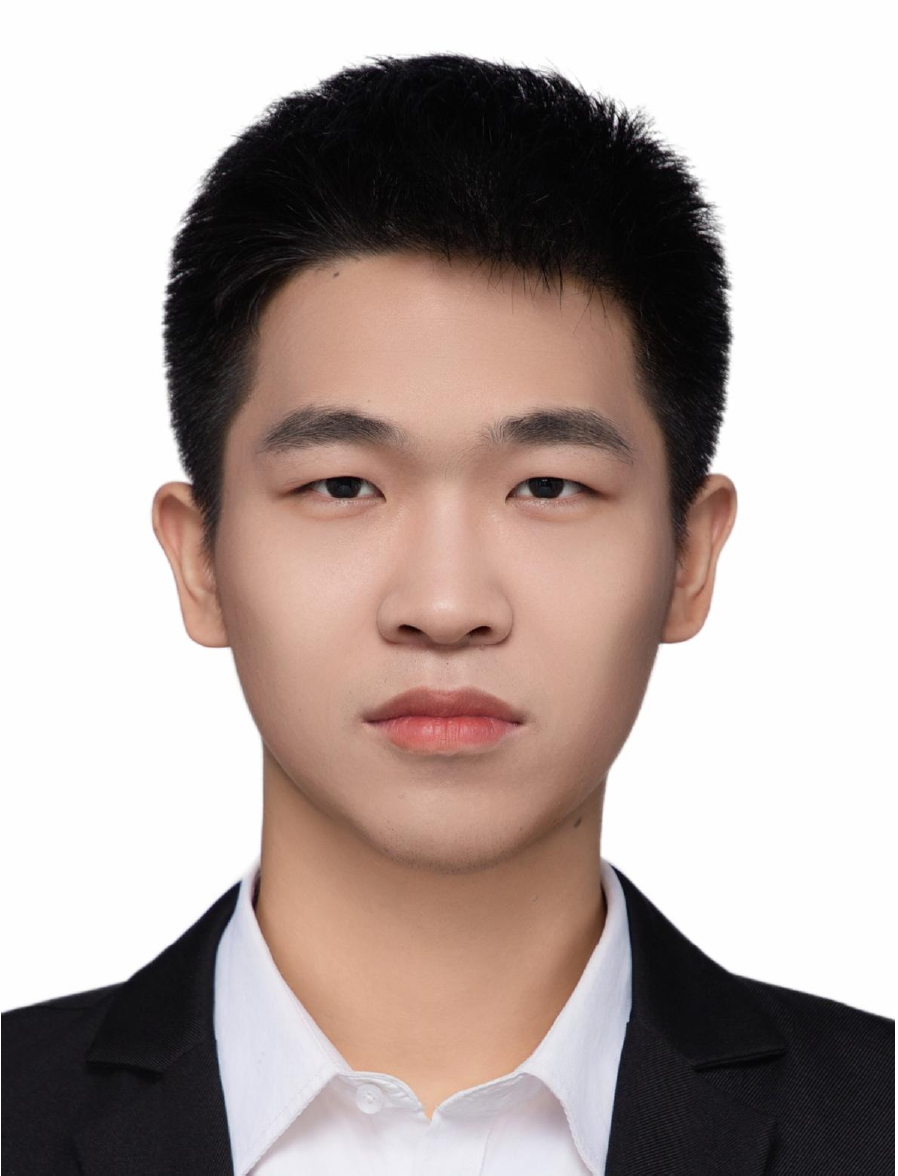}}]{Yankai Wu}
is currently working toward a master's degree at the College of Information Science and Technology, Beijing University of Chemical Technology (BUCT). His primary research focuses on competition-level program code generation methods based on large language models.
\end{IEEEbiography}

\begin{IEEEbiography}[{\includegraphics[width=1in,height=1.25in,clip,keepaspectratio]{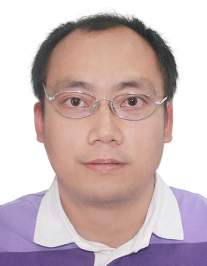}}]{Xiang Chen}
received the B.Sc. degree in the school of management from Xi'an Jiaotong University, China in 2002. Then he received his M.Sc., and Ph.D. degrees in computer software and theory from Nanjing University, China in 2008 and 2011 respectively. He is currently an Associate Professor at the School of Artificial Intelligence and Computer Science, Nantong University, Nantong, China. He has authored or co-authored more than 120 papers in refereed journals or conferences (such as TSE, TOSEM, EMSE, JSS, IST, ICSE, ESEC/FSE, ASE). His research interests include software testing and maintenance, software repository mining, and empirical software engineering. He received two ACM SIGSOFT distinguished paper awards in ICSE 2021 and ICPC 2023. He is the editorial board member of Information and Software Technology.
\end{IEEEbiography}

\begin{IEEEbiography}[{\includegraphics[width=1in,height=1.25in,clip,keepaspectratio]{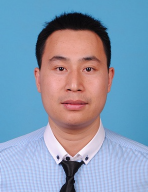}}]{Yong Liu}
(Member, IEEE) received the B.Sc. and M.Sc. degrees in computer science and technology, and the Ph.D. degree in control science and engineering from Beijing University of Chemical Technology, China, in 2008, 2011, and 2018, respectively. He is a Professor with the College of Information Science and Technology, Beijing University of Chemical Technology. His research interests are mainly in intelligent software engineering and AI applications. Particularly, he is interested in software debugging and software testing, such as source code analysis, mutation testing, fault localization, LLM4SE, and AI4SE. In these areas, he has published more than 50 papers in referred journals or conferences, such as IEEE TRANSACTIONS ON SOFTWARE ENGINEERING, IEEE JOURNAL OF SOLID-STATE CIRCUITS, IEEE TRANSACTIONS ON RELIABILITY, STVR, IST, ICSME, FSE, ASE, ISSRE, QRS, SATE, and COMPSAC. He is a member of CCF in China and ACM.
\end{IEEEbiography}

\end{document}